\numberwithin{equation}{section}
\numberwithin{figure}{section}
\numberwithin{table}{section}
\theoremstyle{plain}
\newtheorem{theorem}{Theorem}[section]
\newtheorem{lemma}{Lemma}[section]
\theoremstyle{definition}
\newtheorem{definition}{Definition}[section]
\theoremstyle{remark}
\title{Dynamics of infectious diseases in predator-prey populations: a stochastic model, sustainability, and invariant measure}
\author[1]{Yujie Gao} 
\author[2]{Malay Banerjee}
\author[1]{Ton Viet Ta}
\affil[1]{Mathematical Modeling Laboratory, Kyushu University}
\affil[2]{Department of Mathematics and Statistics, IIT Kanpur}
\date{\today}
\begin{document}
\maketitle

\begin{abstract}
This paper introduces an innovative model for infectious diseases in predator-prey populations. We not only prove the existence of global non-negative solutions but also establish essential criteria for the system's decline and sustainability. Furthermore, we demonstrate the presence of a Borel invariant measure, adding a new dimension to our understanding of the system. To illustrate the practical implications of our findings, we present numerical results. With our model's comprehensive approach, we aim to provide valuable insights into the dynamics of infectious diseases and their impact on predator-prey populations.

  \medskip
  \noindent{\bf Keywords}: Holling-Tanner model, infectious diseases, stochastic differential equations 

  \medskip
  %\noindent{\bf 2000 Mathematics Subject Classification: } 65L60, 65L05, 65L70.
\end{abstract}

\section{Introduction}

The predator-prey relationship is one of the fundamental concepts in mathematical biology, and several models have been proposed to describe this relationship. One such model, originally introduced by Leslie \cite{leslie1948some}, has attracted considerable attention from researchers. This model takes the form of a system of differential equations:
\begin{equation}\label{equ01}
\left\{\begin{aligned}
&\frac{d x(t)}{d t}=x(t)[a_1-b_1x(t)]-p(x)y(t), \\
&\frac{d y(t)}{d t}=y(t)\left[a_2-b_2\frac{y(t)}{x(t)}\right].
\end{aligned}\right.
\end{equation}
Here, $x(t)$ and $y(t)$ denote the densities of the prey and predator populations at time $t$, respectively. The function $p(x)$ describes the predator's response to the prey population.  The prey and predator populations grow logistically with intrinsic growth rates of $a_1$ and $a_2$, respectively, and with environmental carrying capacities of $\frac{a_1}{b_1}$ and $\frac{a_2 x}{b_2}$.
Here $p(x)=\frac{ax}{b+x}$, is known as the Michaelis–Menten type saturating functional response \cite{michaelis1913kinetik}. Furthermore, \eqref{equ01}  is referred to as a Holling-Tanner predator-prey system \cite{zuo2016periodic}:
\begin{equation}\label{mulequa1}
  \left\{\begin{aligned}
  &\frac{d x(t)}{d t}=x(t)[a_1-b_1x(t)]-\frac{ax(t)y(t)}{b+x(t)}, \\
  &\frac{d y(t)}{d t}=y(t)\left[a_2-b_2\frac{y(t)}{x(t)}\right].
  \end{aligned}\right.
  \end{equation}

When $p(x)=\frac{ax}{b+y+x} $, it is known as the Beddington–DeAngelis functional response. 
In \cite{mandal2012stochastic}, authors considered a deterministic predator–prey model with Beddington–DeAngelis functional response and generalist predators
\begin{equation}\label{mulequa2}
  \left\{\begin{aligned}
  &\frac{d x(t)}{d t}=x(t)[a_1-b_1x(t)]-\frac{ax(t)y(t)}{b+y(t)+x(t)}, \\
  &\frac{d y(t)}{d t}=y(t)\left[a_2-b_2\frac{y(t)}{c+x(t)}\right].
  \end{aligned}\right.
  \end{equation}
and its stochastic version. Here, the per-capita growth rate of generalist predator population is modified to $a_2-b_2\frac{y(t)}{c+x(t)}$ in \eqref{mulequa2} based upon the assumption that the effective carrying capacity is proportional to the sum of favourite prey population and alternative implicit food source. Effective environmental carrying capacity of generalist predators is equal to $\frac{a_2(c+x)}{b_2}$. 
%\textcolor{red}{The authors proved the existence of solutions, persistence in mean, and the existence of stationary distribution for the stochastic version of the model. WE MAY REMOVE THIS SENTENCE FROM HERE.}

It is worth noting that the environmental carrying capacities of predators in equations \eqref{equ01} and \eqref{mulequa1}, denoted by $\frac{a_2 x}{b_2}$ and in equation \eqref{mulequa2}, denoted by $\frac{a_2(c+x)}{b_2}$, have lower limits of $0$ and $\frac{a_2c}{b_2}$, respectively. However, these capacities do not have an upper bound that is independent of the initial values. Therefore, in cases where the initial population of prey is very large,  the carrying capacity of generalist predators can also be very high.

In the real world, infectious diseases are a common occurrence in populations. One of the most well-known models for infectious diseases is the SIR model, which was introduced by Kernmack and McKendrick \cite{kernackmckendrick}. This model is described by a system of differential equations,:
 \begin{equation}\label{mulequa4}
  \left\{\begin{aligned}
  &\frac{d S(t)}{d t}=-\frac{\beta }{N}S(t)I(t), \\
  &\frac{d I(t)}{d t}=\frac{\beta }{N}S(t)I(t)-\gamma I(t), \\
  &\frac{d R(t)}{d t}=\gamma I(t).
  \end{aligned}\right.
  \end{equation}
The population is divided into three compartments:: ``susceptible" $S(t)$, ``infected`` $I(t)$, and ``recovered`` $R(t)$, where $N=S(t)+I(t)+R(t)$ remains constant over time. The function $S(t)$ represents the number of individuals who have not been infected with the disease or who are still susceptible to it. The function $I(t)$ represents the number of infected individuals who can infect the susceptible individuals. The function $R(t)$ denotes the number of individuals who have recovered from the disease, including those who have died and can no longer be infected or transmit the disease to others.

The parameter $\beta > 0$ represents the average number of individuals that an infected person can infect per day, while $\gamma > 0$ denotes the recovery rate, $\frac{1}{\gamma}$ measures the average recovery time. The SIR model assumes a short-lived outbreak and does not consider natural birth and death rates. Moreover, it assumes that recovered individuals acquire lifetime immunity and are immune to the disease.

In real-world ecological systems, infectious diseases can impact both predator and prey populations. Several researchers have investigated the dynamics of infectious diseases in predator-prey models. For instance, Pierre et al. \cite{prtgmj} developed a deterministic model in which the predator species is affected by a disease, while Xiao and Chen \cite{XIAO200159} analyzed a predator-prey model with disease in the prey population. Eilersen et al. \cite{EilersenJensen} investigated a deterministic model with one predator and two prey species, in which one prey species carries a disease. In contrast, Li and Wang \cite{liwang} studied a classical stochastic predator-prey model with disease in the predator population. Notably, their models assume that the predator species declines if there is no prey, i.e., the  predators are specialist and they have no alternative food source.

In this paper, we investigate a predator-prey model that includes an infectious disease in the predator species and exhibits logistic growth. We assume that the environmental carrying capacity of predators is bounded both above and below by constants. Unlike the SIR model \eqref{mulequa4} introduced earlier, we assume that any recovered predator individual can be infected again, and the infection rate is the same as that for susceptible individuals.

Let $x(t)$ denote the prey density, $S(t)$ denote the density of susceptible predators, and $I(t)$ denote the density of infected predators at time $t$. Our model is described by the following system of stochastic differential equations:

\begin{equation}\label{mulequa5}
\begin{cases}
  \begin{aligned}
  d x(t)=&\left[ x(t)\{a_1-b_1x(t)\}-\frac{ax(t)(S(t)+I(t))}{b+S(t)+I(t)+x(t)} \right]d t+ \sigma_1x(t)dW_1(t), \\
  d S(t)=&\left[S(t)\left(a_2-\left \{ \frac{1}{c+ x(t)}+h\right \}  \{S(t)+I(t)\} \right)-\beta(S,I)S(t)I(t)+\gamma I(t) \right]dt\\
 &+ \sigma_2S(t)dW_2(t), \\
  d I(t)=&[\beta (S,I) S(t)I(t)-\alpha I(t) ]d t+ \sigma_3I(t)dW_3(t)
  \end{aligned}
  \end{cases}
\end{equation}
coupled with non-negative initial conditions $x(0), y(0), z(0).$ %\textcolor{blue}{
Here we divide the intra-specific competition term among the predators into two parts: one depends upon the available resources and the other is independent to the available resources. The resource independent intra-specific competition rate is denoted by $h$.
%}

In this model, we don't assume that the density of predator species is constant, which means that the total predator population $N(t) = S(t) + I(t)$ varies with time. As a result, the rate of infection also varies with time, here $N(t) = S(t) + I(t)$ varies with time continuously. To account for this, we introduce a nonlinear incidence rate function $\beta(S,I)$ that depends on the densities of susceptible and infected predators, given by
%$$\beta (S,I) =\frac{\beta^*}{N(t)+k} =\frac{\beta^*}{S(t)+I(t)+k},$$
%where $\beta^*$ and $k$ are positive constants. 
$$\beta (S,I) =\frac{\beta_0}{1+\gamma N(t)} =\frac{\beta_0}{1+\gamma(S(t)+I(t))},$$
where $\beta_0$ is a positive constant and  $\gamma>0$ is the recovery rate of infected predators.
% \textcolor{blue}{
 It means that the rate of infection transmission is not directly proportional to the entire size of the infected compartment rather a part of the infected predator can transmit the disease. This is a reasonable assumption as the prey and predator are assumed to be distributed uniformly.
% } 
 Therefore, 
$$\beta (S,I) =\frac{\beta^*}{S(t)+I(t)+k},$$
with $k\,=\,\frac{1}{\gamma}$ and $\beta^*\,=\,\beta_0/\gamma$. 

%\color{black}
%\textcolor{red}{The parameter $h$ represents the upper limit for the environmental carrying capacity, with $\frac{a_2}{h}$ being the maximum predator density. WE MAY REMOVE THIS PART FROM HERE.} 

The constant $\alpha$ is the sum of the recovery rate $\gamma$ and the mortality %\textcolor{red}{$\alpha^*$ IN EQUATION (1.5) it remains as $\alpha$} 
$\alpha^*$ of infected predators. Here, $\alpha^* > 0$ represents the probability of a predator dying after being infected. Other parameters have the same meanings as in the models above.

Note that the natural growth rate and mortality of the population are subject to environmental factors that are often random and unpredictable. These factors can be represented by white noise \cite{zuo2016periodic,ta2015sustainability}. Therefore, in equation \eqref{mulequa5}, the parameters $a_1, a_2$, and $\alpha^*$ are influenced by white noise, which can be modeled as a generalized derivative of Brownian motion \cite{cai2016stochastic,rajasekar2020ergodic}:
\begin{equation}\label{mulequa9}
  %\begin{aligned}
   a_1 \rightarrow a_1 + \sigma_1dW_1(t),  \ a_2 \rightarrow a_2 + \sigma_2dW_2(t), \ \alpha^* \rightarrow \alpha^* -\sigma_3dW_3(t), 
  %\end{aligned}\right.
  \end{equation}
where $W_i(t), i=1,2,3$ are independent Brownian motions defined on a filtered complete probability space $(\Omega,\mathcal{F},\{\mathcal F\}_{t \geq 0},\mathbb P)$. Here, the positive constants $\sigma_i$ represent the intensity of the noise.

The present paper is organized as follows: In the next section, we prove the existence of global positive solutions for the stochastic model. Section 3 provides sufficient conditions for the decline of each predator and prey species. We give a sustainability condition for this system in Section 4. In Section 5, we prove the existence of a Borel invariant measure. Section 6  presents numerical illustrations to validate the analytical findings. Finally, the paper concludes with a summary of our key findings in Section 7.

\section{Global solutions}
\label{sec:global}

In this section, we prove the existence of unique global non-negative solutions to \eqref{mulequa5} and provide some properties of the solutions. To establish existence, we utilize the following lemma.

%the proof is deferred to \ref{sec:proof}.

\begin{theorem}\label{thm:mvt}
Let $(x_0,S_0,I_0)\in \overline{\mathbb{R}_+^3}$. Then, there exists a unique global solution $(x(t),S(t),I(t))$ of \eqref{mulequa5}, such that 
$(x(0),S(0),I(0))=(x_0,S_0,I_0)$ and \ $(x(t),S(t),I(t))\in \overline{\mathbb{R}_+^3}$  \ a.s. 
Furthermore, if $(x_0,S_0,I_0)>0$, then $(x(t),S(t),I(t))\in \mathbb{R}_+^3$ a.s.
\end{theorem}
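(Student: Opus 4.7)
The plan is the standard localization-plus-Lyapunov scheme for stochastic population models. First observe that on $\overline{\mathbb{R}_+^3}$ the denominators $b+S+I+x$, $c+x$, and $S+I+k$ are bounded below by the positive constants $b$, $c$, $k$ respectively, so all drift and diffusion coefficients of \eqref{mulequa5} are locally Lipschitz and of at most linear growth on compact subsets of $\overline{\mathbb{R}_+^3}$. Classical SDE theory then yields a unique maximal strong solution $(x(t),S(t),I(t))$ defined on a stochastic interval $[0,\tau_e)$, where $\tau_e\le\infty$ is the explosion time.

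Second, I would verify that the non-negative orthant is invariant. Each diffusion coefficient carries the corresponding state variable as a factor, and the drifts behave well on the coordinate hyperplanes: if $x=0$ both $x$-drift and $x$-diffusion vanish, so $x(t)\equiv 0$ whenever $x_0=0$; similarly $I(t)\equiv 0$ whenever $I_0=0$; and if $S=0$ the $S$-drift reduces to $\gamma I\ge 0$ while the noise vanishes, pushing $S$ back into the non-negative region. Formal justification uses the explicit exponential-martingale representation for $x$ and $I$ (each solves a linear SDE with random coefficients) and a standard comparison/stopping argument for $S$.

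Third, to rule out explosion when $(x_0,S_0,I_0)>0$, I would apply Mao's method with the Lyapunov function
\[
V(x,S,I)=(x-1-\ln x)+(S-1-\ln S)+(I-1-\ln I),
\]
which is non-negative on $\mathbb{R}_+^3$ and blows up as any component tends to $0$ or $+\infty$. Applying It\^o's formula and using the uniform bounds $\frac{a(S+I)}{b+S+I+x}\le a$, $\frac{1}{c+x}\le\frac{1}{c}$, $\frac{\beta^{*}}{S+I+k}\le\frac{\beta^{*}}{k}$, one should obtain an estimate of the form
\[
\mathcal{L}V(x,S,I)\le K\bigl(1+V(x,S,I)\bigr)
\]
for some constant $K>0$. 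The negative quadratic self-limitation terms $-b_1x^2$ and $-h(S+I)^2$ are what absorb the positive linear and logarithmic contributions; this is precisely where the assumption $h>0$ and the boundedness of $\frac{1}{c+x}$ are essential. With this estimate, introducing stopping times $\tau_n=\inf\{t\in[0,\tau_e):x\vee S\vee I>n \text{ or } x\wedge S\wedge I<1/n\}$ and Gr\"onwall's inequality applied to $\mathbb{E}\,V(x(t\wedge\tau_n),S(t\wedge\tau_n),I(t\wedge\tau_n))$ forces $\tau_n\to\infty$ almost surely, hence $\tau_e=\infty$. For initial data lying on the boundary of $\overline{\mathbb{R}_+^3}$ I would reduce to the lower-dimensional subsystem in the remaining variables, to which the same argument applies.

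The main obstacle I anticipate is the Lyapunov inequality $\mathcal{L}V\le K(1+V)$: the cross terms $\gamma I$ (inflow into $S$) and $\beta(S,I)SI$ (exchange between $S$ and $I$) produce contributions that, after differentiating $-\ln S$ and $-\ln I$, look quadratic, and they must be controlled by the dissipative terms $-b_1x^2$ and $-h(S+I)^2$ plus the It\^o correction $\tfrac12\sigma_i^2$. Since $\beta(S,I)S\le\beta^{*}$ is bounded and $\gamma I/S$ is compensated by $hS(S+I)$ via the elementary inequality $\gamma I/S\le \tfrac{h}{2}S+\tfrac{\gamma^2}{2hS^2}\cdot I^2/S\cdots$ (to be handled more carefully), this should go through; the cleanness of the estimate is the only real technical point of the proof.
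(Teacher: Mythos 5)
Your proposal is correct and follows essentially the same route as the paper: a unique maximal local solution from local Lipschitz continuity, a case analysis reducing boundary initial data to lower-dimensional subsystems, and a Lyapunov--stopping-time--Gr\"onwall argument to exclude explosion, the only cosmetic difference being your choice $V=\sum (u-1-\ln u)$ versus the paper's $V=x^2+S^2+I^2-\log x-\log S-\log I$. Your anticipated obstacle is not actually one: differentiating $-\ln S$ turns the inflow $\gamma I$ into $-\gamma I/S\le 0$, a term of favorable sign, so the estimate $\mathcal{L}V\le K(1+V)$ goes through without the delicate compensation you describe.
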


\begin{proof}

Since all functions on the right-hand side of \eqref{mulequa5} are locally Lipschitz
continuous on $\overline{\mathbb{R}_+^3}$, there is a unique local solution $(u_t,v_t)$ defined on an interval $[0,\tau)$, where $\tau $ is a stopping time.
We know that (\cite{mao2007stochastic}), if $\mathbb{P} \{ \tau <\infty\} >0$ then $\tau$ is an explosion time
on ${\tau <\infty}$, i.e.   
$\lim_{t\to\tau}(| x(t)\vert+ | S(t)\vert+ | I(t)\vert )=\infty$ or $\lim_{t\to\tau}[S(t)+I(t)+x(t)+b]=0$  a.s.on ${\tau <\infty}$. 

To establish the existence of global solutions, we must demonstrate that $\tau = \infty$ a.s. To do so, we will examine five cases.

\textbf{Case 1}: $x_0=S_0=I_0=0$. This is a trivial case because the initial densities of all three populations are zero, which implies that they will remain zero for all time. Thus, we have $x(t)=S(t)=I(t)=0$ a.s. for $0<t<\infty$.

\textbf{Case 2}: $x_0>0,S_0>0,I_0=0$. By uniqueness, we have $I(t)=0$ a.s. for all $t\in [0, \tau)$. Therefore, the system of the first two equations in \eqref{mulequa5} reduces to a stochastic Holling-Tanner type model. The existence of global positive solutions $(x(t), S(t))$ can be found in \cite{mandal2012stochastic}.

\textbf{Case 3}: $x_0=0,S_0>0,I_0>0$.

By uniqueness, $x(t)=0$ a.s. for all $t \in [0, \tau)$. Therefore, the system of the two first equations in \eqref{mulequa5} is a SIR model with logistic growth. Similarly to Case 2, the existence of global positive (S(t), I(t)) is given in \cite{mao2007stochastic}. 

\textbf{Case 4}: $x_0>0,S_0>0,I_0>0$. Let $k_0 >$0 be a positive integer such that $x_0$, $S_0$, $I_0$ lie in the interval $[ \frac{1}{k_0}, k_0] $. 

  Denote  
  \begin{displaymath}
    H_k=\left[\frac{1}{k},k\right]\times \left[\frac{1}{k},k\right]\times \left[\frac{1}{k},k\right], \qquad k=1,2,\cdots
  \end{displaymath}
  then $\cup _{k=k_0}^\infty=\mathbb{R}_+^3$. We define a sequence $\{\tau_k\}_{ k=k_0}^\infty$ of stopping times by 
  \begin{displaymath}
    \tau_k= \inf\{ 0<t<\tau; (x(t),S(t),I(t))\notin H_k\}, 
  \end{displaymath}
  with the convention $\inf\varnothing =\infty$. It's clear that this sequence is non-decreasing. Therefore,
  there exists a limit of this sequence, called $\tau_\infty$, and 
  \begin{displaymath}
    \tau_\infty=\lim_{k\to\infty}\tau_k\leqslant \tau  \qquad a.s.  
  \end{displaymath}

  Thus, to prove that $\tau= \infty$, we prove that  $\tau_\infty= \infty$ a.s. Indeed, suppose on the contrary that $\tau_\infty< \infty$.
  Then, there exists $T>0$ and $0< \varepsilon <1$ such that 
  \begin{displaymath}
    \mathbb{P} \{\tau_\infty <T\}>\varepsilon.
  \end{displaymath}

  Consider a positive increasing continuous function $V$ defined by
   \begin{displaymath}
    \begin{aligned}
    V(x,S,I)&=x^2+S^2+I^2-\log x-\log S-\log I \\
    & =V_1+V_2+V_3,       \quad x>0, S>0, I>0,
 \end{aligned}
  \end{displaymath}
 where
  \begin{equation} \label{E1}
    V_1=x^2-\log x, V_2=S^2-\log S,  V_3=I^2-\log I.
  \end{equation}
  Then,
  \begin{displaymath}
    dV= dV_1+dV_2+dV_3.
  \end{displaymath}
  Taking It$\hat{o}$ formula into $V_1$ first, we have 
  \begin{displaymath}
    dV_1 = \left[ \mu \left(2x-\frac{1}{x}\right)+\frac{1}{2}\sigma_1^2x^2\left(2+\frac{1}{x^2}\right ) \right] dt + \sigma_1x\left(2x-\frac{1}{x} \right)dW_1,
  \end{displaymath} 
  where $\mu=x(t)(a_1-b_1x(t))-\frac{ax(t)(S(t)+I(t))}{b+S(t)+I(t)+x(t)}$.
  Thus,
    \begin{align}
    dV_1 =& \big[[x(t)(a_1-b_1x(t))-\frac{ax(t)(S(t)+I(t))}{b+S(t)+I(t)+x(t)}] (2x(t)-\frac{1}{x(t)}) \notag \\
    & +\frac{1}{2}\sigma_1^2x^2(t)(2+\frac{1}{x^2(t)} ) \big] dt + \sigma_1x(t)(2x(t)-\frac{1}{x(t)} )dW_1  \notag \\
     =&\{2x^2(t)(a_1-b_1x(t))-(a_1-b_1x(t))-\frac{2ax^2(t)(S(t)+I(t))}{b+S(t)+I(t)+x(t)}+\frac{a(S(t)+I(t))}{b+S(t)+I(t)+x(t)}   \notag \\
    &+\sigma_1^2x^2(t)+\frac{1}{2}\sigma_1^2\}dt + \sigma_1x(t)(2x(t)-\frac{1}{x(t)} )dW_1   \label{E2} \\
    \leqslant  & \{2a_1x^2(t)-(a_1-b_1x(t))+\frac{a(S(t)+I(t))}{b+S(t)+I(t)+x(t)}+\sigma_1^2x^2(t)+\frac{1}{2}\sigma_1^2\}dt \notag \\
    &+ \sigma_1x(t)(2x(t)-\frac{1}{x(t)} )dW_1. \notag
  \end{align}

Since $x(t)$, $S(t)$, $I(t)$ are positive for $t \in [0, \tau_\infty)$, there exist $M_1, M_2$ such that
\begin{displaymath}
  \begin{aligned}
 & 2a_1x^2(x)-(a_1-b_1x(0))+\frac{a(S(t)+I(t))}{b+S(t)+I(t)+x(t)}+\sigma_1^2x^2(t)+\frac{1}{2}\sigma_2^2  \\
 &\leqslant 2a_1x^2(t) -(a_1-b_1x(t))+a+\sigma_1^2x^2(t)+\frac{1}{2}\sigma_2^2  \\
 &\leqslant M_1V_1+M_2. 
  \end{aligned}
\end{displaymath}
Thus,
\begin{equation}\label{eeq1}
  \begin{aligned}
  dV_1(t) &\leqslant  (M_1V_1+M_2)dt +\sigma_1x(2x-\frac{1}{x} )dW_1,   \qquad t\in [0, \tau_\infty).
\end{aligned}
\end{equation}
Taking the same argument to $V_2, V_3$, we have
\begin{displaymath}
  \begin{aligned}
  dV_2 &= [ [ S(t)(a_2-(\frac{1}{x(t)+e}+h)(S(t)+I(t)))-\beta (S,I) I(t)S(t)+\gamma I(t)] (2S(t)-\frac{1}{S(t)})  & \\
  & \quad+\frac{1}{2}\sigma_2^2S^2(t)(2+\frac{1}{S^2(t)})] d t+(2\sigma_2S^2(t)-\sigma_2)d W_2 & \\
  & = [2S^2(t)\{a_2-(\frac{1}{x(t)+e}+h)(S(t)+I(t))\}-2\beta (S,I)I(t)S^2(t)+2\gamma I(t)S(t) & \\
  & \quad-\{a_2-(\frac{1}{x(t)+e}+h)(S(t)+I(t))\} + \beta(S,I)I(t)-\gamma \frac{I(t)}{S(t)}+\sigma_2^2S^2(t)+\frac{1}{2}\sigma_2^2] d t & \\
  & \quad+ (2\sigma_2S^2(t)-\sigma_2)d W_2 & \\
  &\leqslant [2a_2S^2(t)+2\gamma I(t)S(t)+(\frac{1}{x(t)+e}+h)(S(t)+I(t))+ \beta(S,I)I(t)+\sigma_2^2S^2(t)+\frac{1}{2}\sigma_2^2] d t & \\
  & \quad+(2\sigma_2S^2(t)-\sigma_2)d W_2,& \\
\end{aligned}
\end{displaymath}
and
\begin{equation}  \label{E3}
\begin{aligned}
  dV_3 =& [(\beta(S,I)S(t)I(t)- \alpha I(t))(2I(t)-\frac{1}{I(t)})+\sigma_3^2I^2(t)+\frac{1}{2}\sigma_3^2] d t \\
& +(2\sigma_3I^2(t)-\sigma_3)d W_3   \\
  =& [2\beta(S,I)S(t)I^2(t)-2(\alpha-\gamma)  I^2(t)-2\gamma I^2(t)-\beta(S,I)S(t)+\alpha+\sigma_3^2I^2(t)+\frac{1}{2}\sigma_3^2] d t  \\
  &+(2\sigma_3I^2(t)-\sigma_3)d W_3.   
\end{aligned}
\end{equation}

Since $\beta(S,I) =\frac{\beta^*}{N+k}  $ where k is a constant, 
$\beta(S,I)S(t)\leqslant \beta^*,$ and $\beta(S,I)I(t) \leqslant \beta^*$. 
There exist $M_3, M_4>0$ such that
\begin{displaymath}
  \begin{aligned}
   &2a_2S^2(t)+2\gamma I(t)S(t)+(\frac{1}{x(t)+e}+h)(S(t)+I(t))+ \beta(S,I)I(t)+\sigma_2^2S^2(t)+\frac{1}{2}\sigma_2^2 d t &\\
   & \leqslant M_3V_2+\frac{1}{2}M_3V_3+M_4, &\\
   \end{aligned}
\end{displaymath}
and
\begin{displaymath}
  \begin{aligned}
  &2\beta(S,I)S(t)I^2(t)-2(\alpha-\gamma)  I^2(t)-2\gamma I^2(t)-\beta(S,I)S(t)+\alpha+\sigma_3^2I^2(t)+\frac{1}{2}\sigma_3^2 d t \leqslant \frac{1}{2}M_3V_3+M_4.
   \end{aligned}
\end{displaymath}
As a consequence,
\begin{equation}\label{eeq2}
  \begin{aligned}
 dV_2 \leqslant M_3V_2+\frac{1}{2}M_3V_3+M_4) dt+[2\sigma_2S^2(t)-\sigma_2]d W_2(t),
   \end{aligned}
\end{equation}
and
\begin{equation}\label{eeq3}
  \begin{aligned}
 dV_3 \leqslant (\frac{1}{2}M_3V_3+M_4) dt+[2\sigma_3I^2(t)-\sigma_3]d W_3(t).
   \end{aligned}
\end{equation}
From \eqref{eeq1}, \eqref{eeq2} and \eqref{eeq3}, we have
\begin{displaymath}
  \begin{aligned}
  dV &\leqslant (K_1V+K_2)dt+\sigma_1x(t)(2x(t)-\frac{1}{x(t)} )dW_1(t) + (2\sigma_2S^2(t)-\sigma_2)d W_2(t) &\\ 
  & \quad+(2\sigma_3I^2(t)-\sigma_3)d W_3(t), &\\
  \end{aligned}
\end{displaymath}
where $K_1=3\max\{M_1,M_3\}$, $K_2=3\max\{M_2,M_4\}.$

Therefore, we have 
\begin{displaymath}
  \begin{aligned}
  \int_{0}^{t\wedge \tau_k}  \,dV &\leqslant \int_{0}^{t\wedge \tau_k} (K_1V+K_2) \,ds+ \int_{0}^{t\wedge \tau_k} \sigma_1x(s)(2x(s)-\frac{1}{x(s)} )\,dW_1  &  \\
  & + \int_{0}^{t\wedge \tau_k} (2\sigma_2S^2(s)-\sigma_2)\,dW_2 + \int_{0}^{t\wedge \tau_k} (2\sigma_3I^2(s)-\sigma_3)\,dW_3,    \qquad       t\in [0,T].
  \end{aligned}
\end{displaymath}

Taking expectation in the two sides of the above inequality, we obtain
\begin{displaymath}
    \begin{aligned}
    \mathbb{E} \int_{0}^{t\wedge \tau_k}  \,dV &\leqslant \mathbb{E} \int_{0}^{t\wedge \tau_k} (K_1V+K_2) \,dt+\mathbb{E}  \int_{0}^{t\wedge \tau_k} \sigma_1x(s)(2x(s)-\frac{1}{x(s)} )\,dW_1&  \\
    & + \mathbb{E} \int_{0}^{t\wedge \tau_k} (2\sigma_2S^2(s)-\sigma_2)\,dW_2 +\mathbb{E}  \int_{0}^{t\wedge \tau_k} (2\sigma_3I^2(s)-\sigma_3)\,dW_3,   \qquad t \in [0,T].& \\
  \end{aligned}
\end{displaymath}
Then,
  \begin{align*}
  \mathbb{E} [V(x(t\wedge \tau_k), S(t\wedge \tau_k), I(t\wedge \tau_k))-V(x_0, S_0, I_0)] &\leqslant K_2\mathbb{E} (t\wedge \tau_k)+\mathbb{E} \int_{0}^{t\wedge \tau_k} K_1V \,dt  \\
  &\leqslant K_2T+\int_{0}^{t\wedge \tau_k} \mathbb{E} K_1V \,dt, \qquad t \in [0,T].  
  \end{align*}
According to the Gronwall inequality, we can draw a conclusion that
\begin{equation*}
  \begin{aligned}
    \mathbb{E} V(x_{t\wedge \tau_k}, S_{t\wedge \tau_k}, I_{t\wedge \tau_k}) &\leqslant [V(x_0, S_0, I_0)+K_2T]e^{K_1t}, \qquad  t \in [0,T].&  
  \end{aligned}
\end{equation*}
Thus,
\begin{equation}  \label{mulequa11}
  \begin{aligned}
    \mathbb{E} V(x_{T\wedge \tau_k}, S_{T\wedge \tau_k}, I_{T\wedge \tau_k}) &\leqslant [V(x_0, S_0, I_0)+K_2T]e^{K_1T}.    
  \end{aligned}
\end{equation}
On the other hand,
  \begin{align}
    \mathbb{E} V(x(T\wedge \tau_k), S(T\wedge \tau_k), I(T\wedge \tau_k)) &\geqslant \mathbb{E} \textbf{1}_{\{\tau_\infty<T\}}V(x(T\wedge \tau_k), S(T\wedge \tau_k), I(T\wedge \tau_k)) \notag \\
    &\geqslant \mathbb{E} \textbf{1}_{\{\tau_\infty<T\}}V(x(\tau_k), S(\tau_k), I(\tau_k))    \notag \\
    &\geqslant \varepsilon \mathbb{E}V(x(\tau_k), S(\tau_k), I(\tau_k)),  \label{mulequa12}
  \end{align}
(because we have $\mathbb{P} (\tau_\infty<T)>\varepsilon$).

From (\ref{mulequa11}) and (\ref{mulequa12}), we have
\begin{equation}\label{mulequa13}
  \begin{aligned}
  \varepsilon \mathbb{E} V(x(\tau_k), S(\tau_k), I(\tau_k))\leqslant [V(x_0, S_0, I_0)+K_2 T]e^{K_1T} < \infty,  \qquad    t \in [0,T].
  \end{aligned}
\end{equation}
According to the definition of $\tau_k$, we find that $(x(\tau_k), S(\tau_k), I(\tau_k))\in \partial H_k$. Therefore, 
\begin{displaymath}
  \begin{aligned}
\mathbb{E} V(x(\tau_k), S(\tau_k), I(\tau_k))& \geqslant min\{K^2-\log K, (\frac{1}{K})^2-\log \frac{1}{K}\}  & \\
& = min\{K^2-\log K, (\frac{1}{K})^2+\log K\}. 
 \end{aligned}
\end{displaymath}
Thus, taking $K \to \infty$ and using \eqref{mulequa13}, we have $\infty  \leq  [V(x_0, S_0, I_0)+K_2T]e^{K_1T} < \infty$. This contradiction implies that $\tau_\infty=\infty$ a.s. Thus, $\tau = \infty$ a.s.

\textbf{Case 5}: $x_0>0, S_0=0, I_0>0$. Consider two stopping times $\tau^\ast_1=\inf\{0<t<\tau: x(t)=0\}$ and $\tau^\ast_2= \inf\{0<t<\tau: I(t)=0 \}$ with the convention $\inf \varnothing =\infty$. Obviously, we have $x(t)>0$ in $[0,\tau^\ast_1)$ and $I(t)>0$ in $[0,\tau^\ast_2)$. 

Putting $\tau^\ast = \min \{\tau^\ast_1, \tau^\ast_2\}$, we have $x(t)>0$ and $I(t)>0$ in $[0, \tau^\ast)$.

%-------------------------------------%

Firstly, we prove that $S(t)>0$ in $(0, \tau^\ast)$. Consider the equations:
$$
    d S(t) = \{S(t)[\textcolor{red}{a2}-(\frac{1}{x(t)+e}+h)(S(t)+I(t))]-\beta(S,I)S(t)I(t)+\gamma I(t)\}d t + \sigma_2S(t)dW_2(t), $$
and
$$
    d \underline{S} (t) = \{\underline{S}(t)[a2-(\frac{1}{x(t)+e}+h)(\underline{S}(t)+I(t))] -\beta(S,I)\underline{S}(t)I(t)\}d t + \sigma_2\underline{S}(t)dW_2(t),  
$$
where $\underline S(0)=S(0)=0.$ 
We have $\underline{S} (t) = 0$. Since $\gamma I(t) > 0$ on $[0, \tau^\ast_2)$, using the comparison theorem, we have $S(t) \geqslant 0$, $t \in [0, \tau^\ast_2)$.
%It's sure for us to assume $\tau^\ast = \min \{\tau^\ast_1, \tau^\ast_2\}$, and if we see $\tau^\ast = \infty$, we can get the conclusion we want in global.

Let us now show that $\mathbb{E} I(t)$ is bounded above. From
  \begin{displaymath}
    \begin{aligned}
      d I(t)&=\{\beta(S,I)S(t)I(t)- \alpha I(t) \}d t+\sigma_3I(t)dW_3(t), & \\
      \end{aligned}
  \end{displaymath}
  we have
  \begin{displaymath}
    \begin{aligned}
      d \ln I(t)&=\{\beta(S,I)S(t)- \alpha  -\frac{1}{2} \sigma_3^2 \}d t+ \sigma_3dW_3(t).  & \\
      \end{aligned}
  \end{displaymath}
Taking the integration of both the hand sides, we have
   $$   \ln \frac{I(t)}{I_0} =\int_{0}^{t} [\beta (s)S(s)- \alpha  -\frac{1}{2} \sigma_3^2 ]ds + \int_{0}^{t} \sigma_3 \,dW_3(s).$$
Then, 
    \begin{align*}
      \mathbb{E} I(t)  &   \leqslant  I_0   \mathbb{E} e^{  \int_{0}^{t} [\beta (s)S(s)-\gamma -(\alpha-\gamma)-\frac{1}{2} \sigma_3^2]ds+ \int_{0}^{t} \sigma_3 dW_3(s)}  \\
      &\leqslant   I_0   \mathbb{E} e^{\int_{0}^{t} [\beta^*-\alpha-\frac{1}{2} \sigma_3^2] ds+ \int_{0}^{t} \sigma_3 dW_3(s)}  \qquad   t\in [0, \bar{T}],
    \end{align*}
  where $\bar{T}$ is a positive constant.
 Since 
 $\mathbb{E}  e^ {\int_{t_0}^{t}  \alpha (s)\,dW_s } =e^{ \frac{1}{2} \int_{t_0}^{t} \alpha^2(s) ds }$ (\cite{jiang2005note}), 
we have 
$$\mathbb{E} I(t)  \leq e^{(\beta^*-\alpha)t} \leq \tilde{M}= \max\{e^{(\beta^*-\alpha)\bar{T}}, 1\}, \qquad  t\in [0, \bar{T}].$$

Using this result,  we now consider the boundedness of $S(t)$. Taking integration and expectation of the above quation for $S$, we have
\begin{displaymath}
    \begin{aligned}
   \mathbb{E}  S(t) & \leqslant \int_{0}^{t} \{a_2\mathbb{E} S(t)-h\mathbb{E} S^2(t)\} \, d t +\gamma \int_{0}^{t} \mathbb{E} (I(t))dt  \\
   & \leqslant \int_{0}^{t} \{a_2\mathbb{E} S(t)-h\mathbb{E}^2 S(t)\} \, d t +\gamma \int_{0}^{t} \tilde{M} dt,  \qquad  t\in [0, \bar{T}]. \\
     \end{aligned}
 \end{displaymath}
Therefore,  $\mathbb{E} S(t)\leq y(t), t \in [0, \bar{T}]$, where $y$ is the solution of this ordinary differential equation:
\begin{equation}\label{mulequa14}
  \left\{\begin{aligned}
    &d y(t) = \{a_2y(t)-hy^2(t) +\gamma  \tilde{M}\}d t, \\
    &y(0)=S_0>0,    \qquad  t\in [0, \bar{T}].
  \end{aligned}\right.
\end{equation}
It is easy see that there exists $ \tilde{\tilde{M} }>0$ such that  $y(t) \leq \tilde{\tilde{M} } $ in $[0,\bar{T}].$
Therefore, we have $\mathbb{E} S(t)\leqslant \tilde{\tilde{M} }$ in $[0,\bar{T}]$.

%----------------------------------------------------------%

Let us now prove  the global positivity of $x$, $S$, and $I$.
We use the same method in Case 4 to show that $\tau^\ast =\infty$. 
Consider a positive integer $k_0$ such that $x_0, I_0$ lie in $[\frac{1}{k_0}, k_0]$. Define 
$$H_k^2 = [\frac{1}{k}, k]\times [\frac{1}{k}, k],$$
 and 
\begin{displaymath}
  \tau_k=   \inf \{ 0<t<\tau^\ast; (x(t),I(t))\notin H_k\}
\end{displaymath}
The increasing sequence $\{\tau_k\}_{k=k_0}^\infty$ has a limit $\tau_\infty$ satisfying
\begin{displaymath}
  \tau_\infty = \lim_{k\to\infty} \tau_k \leqslant \tau^\ast  \qquad a.s.
\end{displaymath}
Therefore, it suffices to show that $ \tau_\infty =\infty$  a.s. 
We suppose the contrary. Then, there exist $T>0$, and $0<\varepsilon <1$, such that
\begin{displaymath}
  \mathbb{P} \{\tau_\infty<T\} >\varepsilon. 
\end{displaymath}

Consider a positive function
\begin{displaymath}
  H(x, I)=x^2+I^2-\log x-\log I =V_1+V_3,
\end{displaymath}
where $V_1$ and $V_3$ are defined by \eqref{E1}. By the It$\hat{o} $ formula, we have
  \begin{align*}
    dV_1 =&[2x^2(t)(a_1-b_1x(t))-(a_1-b_1x(t))-\frac{2 ax^2(t) (S(t)+I(t))}{b+S(t)+I(t)+x(t)}+\frac{a(S(t)+I(t))}{b+S(t)+I(t)+x(t)}    \\
    &+\sigma_1^2x^2(t)+\frac{1}{2}\sigma_1^2]dt + \sigma_1x(t)[2x(t)-\frac{1}{x(t)}]dW_1, 
     \end{align*}
and
  \begin{align*}
  dV_3  =& [2\beta(S,I)S(t)I^2(t)-2(\alpha-\gamma) I^2(t)-2\gamma I^2(t)-\beta(S,I)S(t)+(\alpha-\gamma)+\gamma+\sigma_3^2I^2(t) \\
&+\frac{1}{2}\sigma_3^2]dt +(2\sigma_3I^2(t)-\sigma_3)d W_3.
     \end{align*}

Using the same arguments  in Case 4 (see \eqref{eeq1} and \eqref{eeq3}),
it is clear that there exist $K_1, K_2 >0$ such that  
$$
dV_1 \leqslant  (K_1V_1+K_2)dt +\sigma_1x(t)(2x(t)-\frac{1}{x(t)}) dW_1,  \qquad t \in [0, \tau_k),
$$  
and
$$
dV_3 \leqslant  [(K_1V_3+K_2)]dt +\sigma_3I(t)(2I(t)-\frac{1}{I(t)})dW_3,   \qquad t \in [0, \tau_k).
$$
Thus, we have
\begin{displaymath}
  \begin{aligned}
    dH(t) &\leqslant  [K_1H(t)+2K_2]dt +\sigma_1x(t)(2x(t)-\frac{1}{x(t)} )dW_1 +\sigma_3I(t)(2I(t)+\frac{1}{I(t)} )dW_3. & \\
  \end{aligned}
\end{displaymath}
Therefore,
\begin{displaymath}
  \begin{aligned}
  \int_{0}^{t\wedge \tau_k}  \,dH \leqslant  &\int_{0}^{t\wedge \tau_k} (K_1H(s)+2K_2) \,ds+ \int_{0}^{t\wedge \tau_k} \sigma_1x(s)(2x(s)-\frac{1}{x(s)} )\,dW_1 \\
  &+ \int_{0}^{t\wedge \tau_k} (2\sigma_3I^2(s)-\sigma_3)\,dW_3,     \qquad       t\in [0,T].
  \end{aligned}
\end{displaymath}

 Taking expectation in the two sides, we obtain that 
 \begin{displaymath}
  \begin{aligned}
  \mathbb{E} \int_{0}^{t\wedge \tau_k}  \,dH \leqslant  & \mathbb{E} \int_{0}^{t\wedge \tau_k} (K_1H(s)+K_2) \,ds+\mathbb{E}  \int_{0}^{t\wedge \tau_k} \sigma_1x(s)(2x(s)-\frac{1}{x(s)} )\,dW_1  \\
  &+ \mathbb{E}  \int_{0}^{t\wedge \tau_k} (2\sigma_3I^2(s)-\sigma_3)\,dW_3.  \\
  \end{aligned}
\end{displaymath}
%where $\bar{K_2}=2K_2+\frac{1}{k}\tilde{\tilde{M} } $.
Thus
 \begin{displaymath}
  \begin{aligned}
  \mathbb{E} [V(x(t\wedge \tau_k), I(t\wedge \tau_k))-V(x_0, I_0)]&\leqslant K_2(t\wedge \tau_k)+\mathbb{E} \int_{0}^{t\wedge \tau_k} K_1H \,ds, & \\
  &\leqslant K_2T+\int_{0}^{t\wedge \tau_k} \mathbb{E} K_1H \,ds. & \\
  \end{aligned}
\end{displaymath}
According to the Gronwall inequality, we can draw a conclusion that
\begin{displaymath}
  \begin{aligned}
  \mathbb{E} H(x(t\wedge \tau_k), I(t\wedge \tau_k)) &\leqslant [H(x_0, I_0)+K_2T]e^{K_1t} & \\
  &\leqslant [H(x_0, I_0)+K_2T]e^{K_1T},   \qquad    t \in [0,T].
  \end{aligned}
\end{displaymath}

Using the same arguments in Case 4, we arrive at a contradiction that 
$$\infty \leq [H(x_0, I_0)+K_2T]e^{K_1T} <\infty.$$ 
Therefore $\tau_\infty =\infty$ a.s.,  and then  $\tau^\ast =\infty$ a.s.

Thanks to Cases 1$\sim $5, the proof is complete.
\end{proof}

In the next theorem, we show boundedness in expectation for $x$, $S$, $I$.

%————————————————————————————修改定理从这里开始————————————————————————%
\begin{theorem}\label{thm:tvb}
  Let $(x(t), S(t), I(t))$ be the solution of \eqref{mulequa5} with initial data  $(x_0, S_0, I_0) \in \overline{\mathbb{R}_+^3}$. Then
\begin{itemize}
\item [(i)] For $1\leqslant \theta <\infty$, there exists $\alpha_{\theta}>0$ such that 
  \begin{displaymath}
     \mathbb{E} x^\theta (t)\leqslant \alpha_{\theta}, \quad t\in [0,\infty).
  \end{displaymath}
\item [(ii)]If  $\alpha-\beta^*>0$, then for any $1\leqslant \theta < 1+\frac{2(\alpha-\max\{\beta^*, \gamma\})}{\sigma^2_3}$, there exists a constant $\bar{\alpha}_{\theta}>0$ such that 
 \begin{displaymath}
   \mathbb ES^{\theta}(t)+ \mathbb E I^{\theta}(t)  \leq  \bar{\alpha}_{\theta}, \quad t\in [0,\infty).
  \end{displaymath}
\end{itemize}
\end{theorem}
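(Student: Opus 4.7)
My plan is to apply It\^o's formula to power Lyapunov functions $x^\theta$, $I^\theta$, and $S^\theta$, and to exploit the super-linear ``carrying capacity'' terms that dominate the induced linear drifts, yielding ODE-type comparisons after taking expectations and localizing the stochastic integrals to the stopping times of Theorem~\ref{thm:mvt}.

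For part (i), applying It\^o to $V_1=x^\theta$ and discarding the non-positive coupling $-\frac{ax(S+I)}{b+S+I+x}$ bounds the drift by $-\theta b_1 x^{\theta+1}+\bigl[\theta a_1+\tfrac{\theta(\theta-1)}{2}\sigma_1^2\bigr] x^\theta$. Because $s\mapsto As^\theta-Bs^{\theta+1}$ is bounded above on $\mathbb{R}_+$ for any $A,B>0$, for any $\lambda>0$ this drift is dominated by $-\lambda x^\theta+K_\lambda$ for a suitable constant $K_\lambda$. Localizing to the stopping times $\tau_k$ from Theorem~\ref{thm:mvt}, taking expectation to kill the stochastic-integral contribution, and then letting $k\to\infty$, Gronwall's inequality delivers $\mathbb{E}x^\theta(t)\leqslant x_0^\theta+K_\lambda/\lambda$, uniformly in $t\in[0,\infty)$.

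For part (ii), I would proceed in two stages. First, I bound $\mathbb{E}I^\theta$ directly: since $\beta(S,I)S=\beta^* S/(S+I+k)\leqslant\beta^*$, applying It\^o to $I^\theta$ yields
\begin{displaymath}
dI^\theta \leqslant \theta\bigl[\beta^*-\alpha+\tfrac{(\theta-1)\sigma_3^2}{2}\bigr] I^\theta\,dt + \theta\sigma_3 I^\theta\,dW_3,
\end{displaymath}
and the hypothesis $\theta<1+\tfrac{2(\alpha-\max\{\beta^*,\gamma\})}{\sigma_3^2}$ together with $\alpha>\beta^*$ renders the bracketed coefficient strictly negative, so Gronwall gives $\mathbb{E}I^\theta(t)\leqslant I_0^\theta$ uniformly. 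Second, I apply It\^o to $S^\theta$, discarding the non-positive terms $-\theta(\tfrac{1}{c+x}+h)S^\theta(S+I)$ and $-\theta\beta(S,I)S^\theta I$. The cross-term $\theta\gamma S^{\theta-1}I$ coming from the recovery flux is controlled by Young's inequality with exponents $\theta/(\theta-1)$ and $\theta$, producing
\begin{displaymath}
dS^\theta \leqslant \bigl[-\theta h S^{\theta+1} + C_S S^\theta + \gamma I^\theta\bigr]\,dt + \theta\sigma_2 S^\theta\,dW_2,
\end{displaymath}
for some constant $C_S$. As in part (i), the super-linear term $-\theta h S^{\theta+1}$ absorbs $C_S S^\theta$ up to an additive constant, so $\tfrac{d}{dt}\mathbb{E}S^\theta \leqslant -\rho\,\mathbb{E}S^\theta + K + \gamma I_0^\theta$, and Gronwall delivers a uniform bound on $\mathbb{E}S^\theta$; summing the two gives $\bar\alpha_\theta$.

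The principal obstacle is the coupling through the $\gamma I$ flux, the sole source of positive feedback from $I$ into $S$ beyond $S$'s own logistic self-regulation. It forces one to bound $\mathbb{E}I^\theta$ first and only then $\mathbb{E}S^\theta$, rather than attempting a naive bound on $V=S^\theta+I^\theta$ in one shot; the appearance of $\gamma$ in the threshold $\max\{\beta^*,\gamma\}$ plausibly reflects an alternative, more symmetric Young-type balancing on a combined Lyapunov $V=S^\theta+I^\theta$ that needs both the $-\alpha I^\theta$ and the $\gamma S^{\theta-1}I$ contributions to be absorbed simultaneously while keeping the coefficient of $I^\theta$ strictly negative.
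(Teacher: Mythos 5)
Your proposal is correct. For part (i) you follow essentially the paper's argument (It\^o on $x^\theta$, discard the nonpositive predation term, absorb the linear drift into $-b_1\theta x^{\theta+1}$, compare with a scalar ODE); the only cosmetic difference is that you bound $As^\theta-Bs^{\theta+1}\leqslant -\lambda s^\theta+K_\lambda$ pointwise, whereas the paper first takes expectations and uses H\"older to get $\mathbb{E}x^{\theta+1}\geqslant(\mathbb{E}x^\theta)^{(\theta+1)/\theta}$. For part (ii) your route is genuinely different in structure. The paper keeps $S$ and $I$ coupled: it writes differential inequalities for $\mathbb{E}S^\theta$ and $\mathbb{E}I^\theta$, converts the cross terms via H\"older into $\{\mathbb{E}S^\theta\}^{(\theta-1)/\theta}\{\mathbb{E}I^\theta\}^{1/\theta}$ and its mirror image, and then invokes the symmetric bound $\theta\gamma x^{(\theta-1)/\theta}y^{1/\theta}+\theta\beta^* y^{(\theta-1)/\theta}x^{1/\theta}\leqslant\theta\max\{\beta^*,\gamma\}(x+y)$ to close a single inequality for $\mathbb{E}S^\theta+\mathbb{E}I^\theta$; this is exactly where the threshold $\max\{\beta^*,\gamma\}$ in the hypothesis originates, as you correctly guessed in your closing remark. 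You instead decouple: the $I^\theta$ inequality closes on its own because $\beta(S,I)S\leqslant\beta^*$, and the resulting uniform bound on $\mathbb{E}I^\theta$ is then fed into the $S^\theta$ inequality, where the recovery flux $\theta\gamma S^{\theta-1}I$ is handled by pointwise Young and the superlinear $-\theta hS^{\theta+1}$ absorbs everything else. Your decoupled argument is, if anything, slightly sharper: it only requires $\theta<1+2(\alpha-\beta^*)/\sigma_3^2$, a weaker restriction than the stated $\theta<1+2(\alpha-\max\{\beta^*,\gamma\})/\sigma_3^2$ whenever $\gamma>\beta^*$, at the cost of a two-stage rather than one-shot Gronwall argument. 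Both proofs share the same (standard, and equally glossed-over in the paper) need to localize the stochastic integrals before taking expectations and then pass to the limit.
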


\begin{proof} 
Let us first prove (i).
From 
\begin{displaymath}
 \begin{aligned}
 d x(t)=&\left[ x(t)\{a_1-b_1x(t)\}-\frac{ax(t)(S(t)+I(t))}{b+S(t)+I(t)+x(t)} \right]d t+ \sigma_1x(t)dW_1(t),
 \end{aligned}
\end{displaymath}
and the It$\hat{o}$ formula, we have for $\theta \geq 1$
\begin{displaymath}
 \begin{aligned}
 d x^{\theta}(t)=&[ \theta x^{\theta}(t)\{a_1-b_1x(t)\}-\frac{a\theta x^{\theta}(t)(S(t)+I(t))}{b+S(t)+I(t)+x(t)}+\frac{1}{2}\theta (\theta-1)\sigma^2_1x^{\theta}(t)]d t+ \sigma_1\theta x^{\theta}(t)dW_1(t), &\\
 \leqslant&   \theta x^{\theta}(t)\{a_1-b_1x(t)\}+\frac{1}{2}\theta (\theta-1)\sigma^2_1x^{\theta}(t)]d t+ \sigma_1\theta x^{\theta}(t)dW_1(t). &\\
 \end{aligned}
\end{displaymath}
Taking integration and expectation in the two sides, we obtain
\begin{displaymath}
 \begin{aligned}
\mathbb{E} x^{\theta}(t) \leqslant& \int_{0}^{t} [a_1\theta \mathbb{E} x^{\theta}(s)-b_1\theta \mathbb{E} x^{\theta+1}(s)+\frac{1}{2}\theta (\theta-1)\mathbb{E} x^{\theta}(s)] \,ds+   \mathbb{E}\int_{0}^{t} \theta \sigma_1 x^{\theta}(s) \,dW_1.
\end{aligned}
\end{displaymath}
By using the H$\ddot{o}$lder inequality, we have
\begin{displaymath}
 \begin{aligned}
\mathbb{E} x^{\theta}(t) \leqslant& \int_{0}^{t} [a_1\theta \mathbb{E} x^{\theta}(s)-b_1\theta (\mathbb{E} x^{\theta}(s))^{\frac{\theta +1}{\theta}}+\frac{1}{2}\theta (\theta-1)\mathbb{E} x^{\theta}(s)]ds. \\
\end{aligned}
\end{displaymath}
It is easy to see that  there exists a constant $C(\theta)>0$ such that 
$$a_1\theta \mathbb{E} x^{\theta}(s)-b_1\theta (\mathbb{E} x^{\theta}(s))^{\frac{\theta +1}{\theta}}+\frac{1}{2}\theta (\theta-1)\mathbb{E} x^{\theta}(s) < C(\theta)- \mathbb{E} x^{\theta}(s).$$
Thus,
\begin{displaymath}
 \begin{aligned}
  d \mathbb{E} x^{\theta}(t) \leqslant  [C(\theta)- \mathbb{E} x^{\theta}(t)] dt.
\end{aligned}
\end{displaymath}

By the comparison theorem, $\mathbb{E} x^{\theta}(t) \leqslant y(t)$ where $y$ is the solution to this 
 ordinary differential equation
\begin{equation*}
 \begin{cases}
d y =[C(\theta)- y] dt, \\
y(0) =\mathbb{E} x^{\theta}(0).
\end{cases}
\end{equation*}
It is easy to see that there exists $\alpha_{\theta}>0$ such that  $y(t)\leqslant \alpha_{\theta}$ for every  $ t\geq 0$. Thus, 
$$ \mathbb{E} x^{\theta}(t) \leq \alpha_{\theta}, \quad t\in [0,\infty). $$

Next, let us prove  (ii). From the second and third equations of \eqref{mulequa5} and 
the It$\hat{o}$ formula, we have
\begin{displaymath}
 \begin{aligned}
  d S^{\theta}(t)=&[\theta S^{\theta}(t)(a_2-\{ \frac{1}{x(t)+e}+h \}  \{S(t)+I(t)\})-\theta \beta(S,I)S^{\theta}(t)I(t)+\theta\gamma S^{\theta-1}(t) I(t)]dt\\
 &+\frac{1}{2}\theta (\theta-1)\sigma^2_2S^{\theta}(t) dt+\theta \sigma_2S^{\theta}(t)dW_2, \\
 \end{aligned}
\end{displaymath}
and
\begin{displaymath}
 \begin{aligned}
 d I^{\theta}(t)=&\Big[\theta I^{\theta-1}(t)[\beta (S,I) S(t)I(t)- \alpha I(t) ]+ \frac{1}{2}\theta (\theta-1)\sigma^2_3I^{\theta}(t)\Big]dt + \theta \sigma_3I^{\theta}(t)dW_3. 
\end{aligned}
\end{displaymath}
Taking integration and expectation of the two sides, we obtain that
$$
  \mathbb{E} S^{\theta}(t)\leqslant  \int_{0}^{t} [\theta a_2\mathbb{E}S^{\theta}(s)-h\theta \mathbb{E}S^{\theta+1}(s)+\theta \gamma \mathbb{E}( S^{\theta-1}(s) I(s))] \,ds +\int_{0}^{t} \frac{1}{2}\theta (\theta-1)\sigma^2_2 \mathbb{E}S^{\theta}(s)\,ds, 
$$
and 
$$
\mathbb{E} I^{\theta}(t)\leqslant  \int_{0}^{t} [\theta \beta^*\mathbb{E}(I^{\theta-1}(s) S(s))-\gamma\theta \mathbb{E}I^{\theta}(s)-(\alpha-\gamma) \theta \mathbb{E} I^{\theta}(s)+\frac{1}{2}\theta (\theta-1)\sigma^2_3\mathbb{E} I^{\theta}(s) ]\,ds.
$$
By using  the H$\ddot{o}$lder inequality, we then have
$$
d  \mathbb{E} S^{\theta}(t)\leqslant  \Big[ \theta a_2\mathbb{E}S^{\theta}(t)-h\theta \{\mathbb{E}S^{\theta}(t)\}^{\frac{\theta+1}{\theta}}+\theta \gamma \{ \mathbb E S^{\theta}(t)\}^{\frac{\theta-1}{\theta}} \{\mathbb{E} I^{\theta}(t)\}^{\frac{1}{\theta}}  +\frac{1}{2}\theta (\theta-1)\sigma^2_2 \mathbb{E}S^{\theta}(t) \Big] dt, 
$$
and
$$
d  \mathbb{E} I^{\theta}(t)\leqslant  \Big[ \theta \beta^* \{\mathbb{E}(I^{\theta}(t)\}^{\frac{\theta-1}{\theta}} \{\mathbb{E}S^{\theta}(t)\}^{\frac{1}{\theta}}-\gamma\theta \mathbb{E}I^{\theta}(t)-(\alpha-\gamma) \theta \mathbb{E} I^{\theta}(t)+\frac{1}{2}\theta (\theta-1)\sigma^2_3\mathbb{E} I^{\theta}(t)  \Big] dt.
$$
Since for any $x, y>0$,
$$
\theta \gamma x^{\frac{\theta-1}{\theta}} y^{\frac{1}{\theta}} +\theta \beta^*y^{\frac{\theta-1}{\theta}} x^{\frac{1}{\theta}}\leqslant \theta \max\{\beta^*,\gamma\}(x+y),
$$
 we have
\begin{displaymath}
 \begin{aligned}
d (\mathbb{E} S^{\theta}(t)+ \mathbb{E} I^{\theta}(t))  \leq &  \Big [\theta a_2\mathbb{E}S^{\theta}(t) 
	-h\theta \{\mathbb{E}S^{\theta}(t)\}^{\frac{\theta+1}{\theta}}
	 +\frac{1}{2}\theta (\theta-1)\sigma^2_2 \mathbb{E}S^{\theta}(t) -\alpha \theta \mathbb{E} I^{\theta}(t) \\
&+\frac{1}{2}\theta (\theta-1)\sigma^2_3\mathbb{E} I^{\theta}(t)+ \theta \max\{\beta^*,\gamma\}\{\mathbb{E} S^{\theta}(t)+\mathbb{E}I^{\theta}(t)\} \Big] dt\\
\leqslant& \Big [ \bar{C}(\theta)-\mathbb{E} S^{\theta}(t)
+[\frac{1}{2}(\theta-1)\sigma^2_3-\alpha+\max\{\beta^*,\gamma\} ]\theta \mathbb{E} I^{\theta}(t)\Big] dt,
\end{aligned}
\end{displaymath}
where $\bar{C}(\theta)$ is a positive constant only depending on $\theta$ and model parameters. 
Because
$$\frac{1}{2}(\theta-1)\sigma^2_3-\alpha+\max\{\beta^*,\gamma\} <0,$$
we have
$$d (\mathbb ES^{\theta}(t)+ \mathbb E I^{\theta}(t)) \leq  [ \bar{C}(\theta)- \kappa  \{ \mathbb ES^{\theta}(t)+ \mathbb E I^{\theta}(t) \} ]dt,$$
where
$$\kappa = \min \{ 1, \alpha -\max\{\beta^*,\gamma\} -\frac{1}{2}(\theta-1)\sigma^2_3 \} >0.$$
Using the comparison theorem again for this differential inequality, we imply the boundedness of $\mathbb ES^{\theta}(t)+ \mathbb E I^{\theta}(t).$ The proof is complete.
\end{proof}

%Observe that \ref{thm:bigthm,thm:mvt,cor:a} correctly mix references
%to multiple labels.

%\begin{corollary}\label{cor:a}
 % Let $f(x)$ be continuous and differentiable everywhere. If $f(x)$
 % has at least two roots, then $f'(x)$ must have at least one root.
%\end{corollary}

\section{Decline of species}\label{sec:dec}
This section presents sufficient conditions for the decline of both prey and predator species, which are characterized by $S(t), I(t)$ and $x(t)$ tending to 0 a.s. as $t \rightarrow \infty$. Typically, such declines occur when the intensity of noise is high. In this regard, we first investigate the predator species and identify that its intrinsic growth rate $a_2$, recovery rate $\gamma$, and mortality of infected predators $\alpha^*$ must be small, while the intensities $\sigma_i \, (i=2,3)$ must be high for the decline to occur. 

\begin{theorem}\label{thm:dos}
  Let $(x(t), S(t), I(t))$ be the solution of \eqref{mulequa5} with $(x(0), S(0), I(0))=(x_0, S_0, I_0) \in \overline{\mathbb{R}_+^3} $  $\setminus (0,0,0)$.
  Assume that 
   $$\max{\{\beta^*-\alpha, a_2\}}
   <    \frac{1}{2(\frac{1}{\sigma_2^2}+\frac{1}{\sigma_3^2})}.$$
Then, when $t\rightarrow \infty$, S(t) and I(t) converge to 0 almost surely.
\end{theorem}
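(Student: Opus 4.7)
The plan is to show $S(t)+I(t)\to 0$ a.s.\ by applying It\^o's formula to $V(t):=\log(S(t)+I(t))$; this trivially yields both $S(t)\to 0$ and $I(t)\to 0$. The case $S_0=I_0=0$ is trivial (then $S\equiv I\equiv 0$ by uniqueness), so I may assume $S_0+I_0>0$; Theorem~\ref{thm:mvt} then ensures $S(t)+I(t)>0$ for all $t$ a.s. The key observation motivating the use of $\log(S+I)$ is that, when the $S$- and $I$-equations are added, the infection flux $\pm\beta(S,I)SI$ cancels exactly and the $+\gamma I$ in $dS$ cancels with the $-\gamma I$ contained in $-\alpha I = -(\alpha^*+\gamma)I$ in $dI$, producing
\[
d(S+I) = \bigl[a_2 S - \bigl(\tfrac{1}{c+x}+h\bigr)(S+I)S - \alpha^* I\bigr] dt + \sigma_2 S\, dW_2 + \sigma_3 I\, dW_3,
\]
with $\alpha^*=\alpha-\gamma>0$. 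In particular the a priori unbounded ratio $\gamma I/S$ that plagues $d\log S$ never enters the analysis.

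Since $W_2$ and $W_3$ are independent, It\^o's formula yields, with $p:=S/(S+I)$ and $q:=1-p$,
\[
dV = \Bigl\{a_2 p - \bigl(\tfrac{1}{c+x}+h\bigr) S - \alpha^* q - \tfrac{1}{2}\bigl(\sigma_2^2 p^2 + \sigma_3^2 q^2\bigr)\Bigr\} dt + \sigma_2 p\, dW_2 + \sigma_3 q\, dW_3.
\]
Dropping the two nonpositive drift terms reduces matters to the one-variable optimization of $g(p):=a_2 p - \tfrac12(\sigma_2^2 p^2 + \sigma_3^2(1-p)^2)$ over $p\in[0,1]$. The maximum is attained at $p^*=(a_2+\sigma_3^2)/(\sigma_2^2+\sigma_3^2)\in(0,1)$ (since $a_2<\sigma_2^2$, guaranteed by the hypothesis), and a direct computation gives
\[
g(p^*) \;=\; \frac{a_2^2 + 2a_2\sigma_3^2 - \sigma_2^2\sigma_3^2}{2(\sigma_2^2+\sigma_3^2)}.
\]

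The main obstacle is the algebraic verification that the hypothesis $a_2<\tau/2:=\sigma_2^2\sigma_3^2/[2(\sigma_2^2+\sigma_3^2)]$ implies $g(p^*)<0$, i.e.\ $a_2^2 + 2a_2\sigma_3^2 < \sigma_2^2\sigma_3^2$. Since the left-hand side is strictly increasing in $a_2>0$, it suffices to verify the inequality at the borderline value $a_2=\tau/2$; introducing $r:=\sigma_3^2/(\sigma_2^2+\sigma_3^2)\in(0,1)$, this reduces after division by $\sigma_2^2\sigma_3^2$ to the elementary inequality $(1-r)(4-r)>0$, which is manifest.

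Once $g(p^*)<0$ is in hand, the strong law of large numbers for continuous local martingales finishes the argument: the noise term $M(t)=\int_0^t\sigma_2 p\,dW_2+\int_0^t\sigma_3 q\,dW_3$ has $\langle M\rangle_t\le(\sigma_2^2+\sigma_3^2)t$, so $M(t)/t\to 0$ a.s., and hence $\limsup_{t\to\infty} V(t)/t \le g(p^*)<0$ a.s., giving $V(t)\to-\infty$ and so $S(t)+I(t)\to 0$ a.s. The companion hypothesis $\beta^*-\alpha<\tau/2$ enters via a parallel one-step argument applied to $\log I$ alone: using $\beta(S,I)S\le\beta^*$, It\^o yields $\limsup\log I(t)/t \le \beta^*-\alpha-\sigma_3^2/2<0$ (the strict inequality following from $\tau/2<\sigma_3^2/2$), furnishing an independent confirmation that $I(t)\to 0$.
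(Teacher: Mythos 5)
Your proof is correct, but it takes a genuinely different route from the paper's. The paper works with $P(S,I)=\epsilon S+I$ for a small auxiliary $\epsilon>0$: because $\epsilon\neq 1$, the infection flux $\beta(S,I)SI$ and the recovery term $\gamma I$ do \emph{not} cancel in $dP$, and the leftover cross-terms are absorbed by the bound $|\gamma-\beta(S,I)S|\leq\max\{|\beta^*-\gamma|,\gamma\}$ together with the choice of $\epsilon$ in \eqref{E4}; the diffusion correction is bounded below by $\frac{1}{2(1/\sigma_2^2+1/\sigma_3^2)}$ via Cauchy--Schwarz, and the drift above by $\epsilon\max\{|\beta^*-\gamma|,\gamma\}+\max\{\beta^*-\alpha,a_2\}$. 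You instead take $\epsilon=1$, where the $\pm\beta(S,I)SI$ terms cancel exactly and $+\gamma I$ cancels against part of $-\alpha I$ (using the standing model assumption $\alpha^*=\alpha-\gamma>0$ to discard $-\alpha^* q$), and then you optimize the resulting drift $g(p)$ exactly; your borderline algebra $(1-r)(4-r)>0$ checks out, as does the final SLLN step. Two remarks: (a) your argument only uses the hypothesis $a_2<\frac{1}{2(1/\sigma_2^2+1/\sigma_3^2)}$ (plus $\alpha>\gamma$), so it actually shows the condition on $\beta^*-\alpha$ is superfluous for this conclusion --- a genuine sharpening of the theorem; (b) the exact optimization of $g$ is more than you need --- since $p+q=1$, the same Cauchy--Schwarz inequality the paper uses gives $\frac{1}{2}(\sigma_2^2p^2+\sigma_3^2q^2)\geq\frac{1}{2(1/\sigma_2^2+1/\sigma_3^2)}$, hence $g(p)\leq a_2-\frac{1}{2(1/\sigma_2^2+1/\sigma_3^2)}<0$ immediately, which would let you drop the $p^*$ computation and the $(1-r)(4-r)$ verification entirely. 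Both approaches finish identically with the strong law of large numbers for continuous martingales.
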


  \begin{proof}
The method in this proof is similar with \cite{rajasekar2020ergodic}. 
Define a function $P(S, I)$ by 
 $$P(S, I) = \epsilon S+ I,$$
where $\epsilon>0$ is a constant such that
\begin{equation} \label{E4}
\epsilon \max\{ |\beta^* -\gamma |, \gamma\} +\max{\{\beta^*-\alpha, a_2\}}
   -  \frac{1}{2(\frac{1}{\sigma_2^2}+\frac{1}{\sigma_3^2})}<0.
   \end{equation}
We have
    \begin{displaymath}
      \begin{aligned}
      d\ln P(S(t), I(t))&= \mathcal{L} (\ln P(S(t), I(t)))dt +\frac{1}{P(S(t), I(t))}\big[\epsilon\sigma_2S(t)dW_2+\sigma_3I(t)dW_3)\big], 
      \end{aligned}
    \end{displaymath} 
where
    \begin{displaymath}
      \begin{aligned}
        \mathcal{L} &(\ln P(S(t), I(t))) \\
        =& \frac{\epsilon}{P(S(t), I(t))}\big[S(t)\big(a_2-(\frac{1}{x(t)+e} +h)\{S(t)+I(t)\}\big]-\beta(S,I)S(t)I(t)+\gamma I(t)\big]  \\
        & +  \frac{1}{P(S(t), I(t))}\big[\beta(S,I)S(t)I(t)-\alpha I(t)\big] \\
        & -\frac{\epsilon^2}{2P^2(S(t), I(t))}\sigma_2^2S^2(t) -\frac{1}{2P^2(S(t), I(t))}\sigma_3^2I^2(t) \\
        =&  Q_1-Q_2,
      \end{aligned}
    \end{displaymath}
in which
    \begin{displaymath}
      \begin{aligned}
        Q_1=&\frac{\epsilon}{P(S(t), I(t))}\big[S(t)\big(a_2-(\frac{1}{x(t)+e} +h)\{S(t)+I(t)\}\big)-\beta(S,I)S(t)I(t)+\gamma I(t)\big] & \\
        &+  \frac{1}{P(S(t), I(t))}\big[\beta(S,I)S(t)I(t)-\alpha I(t)\big],
      \end{aligned}
    \end{displaymath} 
    and 
    $$Q_2=\frac{\epsilon^2}{2P^2(S(t), I(t))}\sigma_2^2S^2(t) +\frac{1}{2P^2(S(t), I(t))}\sigma_3^2I^2(t).$$
    
We have
    \begin{displaymath}
      \begin{aligned}
        Q_1 \leq & \frac{\epsilon}{P(S(t), I(t))}\big[a_2S(t)-\beta(S,I)S(t)I(t)+\gamma I(t)\big]
        + \frac{1}{P(S(t), I(t))}\big[\beta(S,I)S(t)I(t)- \alpha I(t)\big]  \\
        \leqslant &  \frac{\epsilon}{P(S(t), I(t))}\big[a_2S(t)-\beta(S,I)S(t)I(t)+\gamma I(t)\big]
        + \frac{1}{P(S(t), I(t))}\big[\beta^*I(t)- \alpha I(t)\big]  \\
        =& \frac{\epsilon}{P(S(t), I(t))}I(t)\big[\gamma-\beta(S,I)S(t)\big]+ \frac{1}{P(S(t), I(t))}\big[\epsilon a_2S(t)+(\beta^*- \alpha )I(t)\big]  \\
        \leqslant &  \frac{\epsilon }{P(S(t), I(t))}I(t)| \gamma-\beta(S,I)S(t)\vert + \frac{1}{P(S(t), I(t))}\big[\epsilon a_2S(t)+(\beta^*- \alpha ) I(t)\big]. 
      \end{aligned}
    \end{displaymath} 
Since $\beta(S,I)S(t) \in [0,\beta^*]$, 
 $$|\gamma-\beta(S,I)S(t)\vert \leqslant \max\{| \beta^*-\gamma\vert, \gamma\}.$$
In addition,  
 \begin{displaymath}
      \begin{aligned}
       \frac{I(t)}{P(S(t),I(t))}=\frac{I(t)}{\epsilon S(t)+ I(t)}\leqslant 1
      \end{aligned}
    \end{displaymath} 
Thus,
    \begin{displaymath}
      \begin{aligned}
        Q_1 &\leqslant \epsilon \max\{| \beta^*-\gamma\vert, \gamma\}+  \frac{1}{P}\big[\epsilon a_2S(t)+(\beta^*-\alpha) I(t)\big] & \\
        &\leqslant \epsilon \max\{| \beta^*-\gamma\vert, \gamma\} + \frac{1}{P} \max\{a_2,\beta^*-\alpha\}\big[\epsilon S(t)+ I(t)\big]&\\
        &\leqslant  \epsilon max\{| \beta^*-\gamma\vert, \gamma\}+ \max\{\beta^*-\alpha, a_2\}.& 
      \end{aligned}
    \end{displaymath} 
    
For $Q_2$, from 
    \begin{displaymath}
      \begin{aligned}
        P^2(S(t), I(t)) = (\epsilon S(t)+I(t))^2\leqslant (\epsilon^2\sigma_2^2S^2(t)+ \sigma_3^2I^2(t))(\frac{1}{\sigma_2^2}+\frac{1}{\sigma_3^2} ),
      \end{aligned}
    \end{displaymath}   
we have    
$$
        Q_2 = \frac{\epsilon^2}{2P^2}\sigma_2^2S^2(t) +\frac{1}{2P^2}\sigma_3^2I^2(t) \geqslant \frac{1}{2(\frac{1}{\sigma_2^2}+\frac{1}{\sigma_3^2} )}. 
$$
    
    Therefore, we obtain that 
    \begin{displaymath}
      \begin{aligned}
        d \ln P(S(t), I(t))\leqslant  & \Big[ \epsilon \max\{| \beta^*-\gamma\vert, \gamma\}+ \max\{\beta^*-\alpha, a_2\}- \frac{1}{2(\frac{1}{\sigma_2^2}+\frac{1}{\sigma_3^2} )} \Big] dt \\
        &+ \frac{1}{P(S(t), I(t))} \big[\epsilon\sigma_2S(t)dW_2+ \sigma_3I(t)dW_3\big].
      \end{aligned}
    \end{displaymath} 
    Taking the integral on $[0,t]$, we have
    \begin{displaymath}
      \begin{aligned}
        \ln P(S(t), I(t))\leqslant  &\ln P(S(0),I(0)) +\big[ \epsilon \max\{|\beta^*-\gamma\vert, \gamma\}+ \max\{\beta^*-\alpha, a_2\}-\frac{1}{2(\frac{1}{\sigma_2^2}+\frac{1}{\sigma_3^2} )}\big]t &\\
        &+\int_{0}^{t}  \frac{1}{P(S(s), I(s))}\big[\epsilon\sigma_2S(s)\,dW_2+  \sigma_3I(s)\,dW_3\big]. & \\
        \end{aligned}
    \end{displaymath}
    Then,
    \begin{displaymath}
      \begin{aligned}
        \frac{\ln  P(S(t), I(t))}{t}\leqslant & \frac{ \ln P(S(0),I(0))}{t} +\big[ \epsilon \max\{|\beta^*-\gamma\vert, \gamma\}+ \max\{\beta^*-\alpha, a_2\}-\frac{1}{2(\frac{1}{\sigma_2^2}+\frac{1}{\sigma_3^2} )}\big]\\
        &+ \frac{1}{t} \int_{0}^{t}  \frac{1}{P(S(s), I(s))}\big[\epsilon\sigma_2S(s)\,dW_2+  \sigma_3I(s)\,dW_3\big].  & 
      \end{aligned}
    \end{displaymath} 
    
    Let us consider two real-valued continuous martingales vanishing at $t=0$:
    $$B^{(2)}_t= \int_{0}^{t} \frac{\epsilon\sigma_2S(s)}{P(S(s),I(s))} \, dW_2(s), \quad B^{(3)}_t= \int_{0}^{t} \frac{ \sigma_3I(s)}{P(S(s),I(s))} \, dW_3(s).$$ 
    They have quadratic form given by 
$$
       \langle B^{(2)}_t\rangle = \int_{0}^{t} \frac{\xi^2_1\sigma^2_2S^2(s)}{P^2(S(s),I(s))} \, ds  \leq \sigma_2^2 t,  \qquad
        \langle B^{(3)}_t\rangle = \int_{0}^{t} \frac{\xi^2_2\sigma^2_3I^2(s)}{P^2(S(s),I(s))} \, ds \leq \sigma_3^2 t.
$$
    Using the strong law of large numbers for martingale\cite{mao2007stochastic}, we have
    \begin{displaymath}
      \begin{aligned}
        \lim_{t\to\infty}\frac{B^{(i)}_t}{t}=0  \qquad a.s. \qquad (i=2,3).
      \end{aligned}
    \end{displaymath}
    Thus, we obtain that
    \begin{displaymath}
      \begin{aligned}
       \limsup_{t\to\infty} \frac{ \ln P(S(t), I(t))}{t}  \leqslant  & \lim_{t\to\infty}\frac { \ln P(S(0),I(0))}{t} +\big[ \epsilon  \max\{|\beta^*-\gamma\vert, \gamma\} \\
       &+  \max\{\beta^*-\alpha, a_2\}-\frac{1}{2(\frac{1}{\sigma_2^2}+\frac{1}{\sigma_3^2} )}\big]  \\
       \leqslant & \epsilon  \max \{|\beta^*-\gamma\vert, \gamma\}+  \max \{\beta^*-\alpha, a_2\}-\frac{1}{2(\frac{1}{\sigma_2^2}+\frac{1}{\sigma_3^2} )} \qquad  a.s. 
      \end{aligned}
    \end{displaymath} 
It then follows from \eqref{E4} that 
    \begin{displaymath}
      \begin{aligned}
        \limsup _{t\to\infty} \frac{ \ln P(S(t), I(t))}{t}&<0 \qquad  a.s.
      \end{aligned}
    \end{displaymath} 
    As a consequence, we have 
    \begin{displaymath}
      \begin{aligned}
        \limsup _{t\to\infty} \frac{ \ln S(t)}{t}&<0, \qquad
        \limsup _{t\to\infty} \frac{ \ln I(t)}{t}&<0 \qquad a.s.&
      \end{aligned}
    \end{displaymath}
These imply that 
    \begin{displaymath}
      \begin{aligned}
        \lim_{t\to\infty} S(t)=0,\qquad  \lim_{t\to\infty} I(t)=0 \qquad a.s.,
      \end{aligned}
    \end{displaymath} 
i.e.  the population of predator declines.
  \end{proof}

 In the next theorem, we consider the decline of prey species.
 
\begin{theorem}\label{thm:dop}
  Let $(x(t), S(t), I(t))$ be the solution of \eqref{mulequa5} with $(x(0), S(0), I(0))=(x_0, S_0, I_0) \in \overline{\mathbb{R}_+^3}\setminus (0, 0, 0)$. 
  Assume that $a_1-\frac{1}{2}\sigma_1^2 <0$. Then,
  when $t\rightarrow \infty$, x(t) converges to 0 almost surely.
\end{theorem}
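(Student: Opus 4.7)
The plan is to apply It\^o's formula to $\ln x(t)$ and use the strong law of large numbers for Brownian motion, which is the standard route for proving almost-sure decline of a single geometric-Brownian-motion-like component. The key observation is that the nonlinear terms in the prey equation that involve $x$, $S$, $I$ contribute only non-positive quantities to the drift of $\ln x$, so they can simply be discarded to obtain a clean upper bound.

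First, I would dispose of the degenerate case $x_0 = 0$: by uniqueness of solutions (Theorem~\ref{thm:mvt}) one has $x(t) \equiv 0$ a.s., and the conclusion is immediate. So I assume $x_0 > 0$, in which case Theorem~\ref{thm:mvt} guarantees $x(t) > 0$ a.s.\ for all $t \geq 0$, and $\ln x(t)$ is well defined.

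Next, I would apply It\^o's formula to $\ln x(t)$, using the prey equation in \eqref{mulequa5}, to obtain
\begin{equation*}
d \ln x(t) = \left[a_1 - b_1 x(t) - \frac{a(S(t)+I(t))}{b+S(t)+I(t)+x(t)} - \tfrac{1}{2}\sigma_1^2\right]dt + \sigma_1\, dW_1(t).
\end{equation*}
Since $x(t), S(t), I(t) \geq 0$ a.s., the terms $-b_1 x(t)$ and $-\frac{a(S(t)+I(t))}{b+S(t)+I(t)+x(t)}$ are non-positive, so after integrating from $0$ to $t$ and dropping them I get
\begin{equation*}
\ln x(t) \leq \ln x_0 + \bigl(a_1 - \tfrac{1}{2}\sigma_1^2\bigr) t + \sigma_1 W_1(t).
\end{equation*}

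Dividing by $t$ and using the classical fact that $\lim_{t \to \infty} W_1(t)/t = 0$ a.s.\ (strong law of large numbers for Brownian motion, as cited in the preceding theorem), I conclude
\begin{equation*}
\limsup_{t \to \infty} \frac{\ln x(t)}{t} \leq a_1 - \tfrac{1}{2}\sigma_1^2 < 0 \quad \text{a.s.},
\end{equation*}
which immediately gives $\lim_{t \to \infty} x(t) = 0$ a.s. There is essentially no obstacle here; the argument is short because the coupling to $S$ and $I$ enters $\ln x$ only through a sign-definite drift term, so no tail estimates on $S(t)+I(t)$ are needed. The only thing to be careful about is the positivity of $x(t)$ to justify the logarithm, which is handled by Theorem~\ref{thm:mvt}.
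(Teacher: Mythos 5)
Your proposal is correct and follows essentially the same route as the paper: apply It\^o's formula to $\ln x(t)$, discard the non-positive drift terms $-b_1x(t)$ and $-\frac{a(S(t)+I(t))}{b+S(t)+I(t)+x(t)}$, and use $W_1(t)/t \to 0$ a.s.\ to get $\limsup_{t\to\infty}\frac{\ln x(t)}{t}\leq a_1-\frac{1}{2}\sigma_1^2<0$. Your explicit treatment of the degenerate case $x_0=0$ is a small tidiness the paper omits, but it does not change the argument.
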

\begin{proof}
  From
  \begin{displaymath}\label{mulequa6}
    \begin{aligned}
    d x(t)&=\big\{x(t)\big[a_1-b_1x(t)\big]-\frac{ax(t)(S(t)+I(t))}{b+S(t)+I(t)+x(t)}\big\}d t+ \sigma_1x(t)dW_1(t) 
  \end{aligned}
\end{displaymath}
we have
\begin{displaymath}\label{mulequa7}
  \begin{aligned}
    d\ln x(t)&=\big\{(a_1-b_1x(t))-\frac{a(S(t)+I(t))}{b+S(t)+I(t)+x(t)}- \frac{1}{2}\sigma_1^2\big\}d t+ \sigma_1dW_1(t) \\
    & \leqslant (a_1 -\frac{1}{2}\sigma_1^2)d t + \sigma_1dW_1(t). 
    \end{aligned}
  \end{displaymath}
  Therefore,
  $$ \frac{\ln x(t)-\ln x_0}{t} \leqslant (a_1-\frac{1}{2}\sigma_1^2)+\frac{\sigma_1W_1}{t}   \qquad a.s.   $$
  Because $\lim_{t \to \infty} \frac{\sigma_1W_1}{t} =0$, taking $t \rightarrow \infty$ in the later inequality yields
  
  \begin{displaymath}\label{mulequa8}
    \begin{aligned}
      \limsup_{t\to\infty}\frac{\ln x(t)}{t} \leqslant a_1-\frac{1}{2}\sigma_1^2<0.
    \end{aligned}
  \end{displaymath}
  As a consequence, $\lim_{t\to\infty} x(t)=0$ a.s. The proof is complete.
\end{proof}

\section{Sustainability of species}\label{sec:scs}
In this section, we present some sufficient conditions for the sustainability of the system. These conditions demonstrate that the system can be sustainable if the intensity of noise and the mortality of infected predators are high, and the intrinsic growth rate of prey and predator species is large.

Firstly, we define the sustainability of \eqref{mulequa5} as follows.
\begin{definition}
  The system \eqref{mulequa5} is sustainable if, for every initial value $(x(0), S(0), I(0)) \in \overline{\mathbb{R}_+^3}\setminus (0, 0, 0)$, the 
  solution $(x(t),S(t),I(t))$ satisfies
  \begin{displaymath}
    \begin{aligned}
      \liminf_{t\to\infty} \frac{1}{t} \int_{0}^{t} S(s) \,ds >0,\qquad  \liminf_{t\to\infty} \frac{1}{t} \int_{0}^{t} x(s) \,ds >0, \qquad  \lim_{t\to\infty} I(t)=0  \qquad a.s.
    \end{aligned}
  \end{displaymath}
\end{definition}

To show the sustainability of the system, we use the following lemma.
\begin{lemma}\cite{liu2011survival, mandal2012stochastic}\label{lem: psm}
  Let $\{B_i(t)\}_{i=1,2,\dots,N}$ be  sequence of standard Brownian motions and $x \in \mathcal{C} [\mathbb{\overline{R}}_+, \mathbb{R}_+^0]$  a random process on
   on the probability space $(\Omega, \mathcal{F}, \mathbb{P} )$. 
\begin{itemize}
\item[(i)] If there are positive constants $\mu $, $\lambda,$ and a random time $T>0$ such that
  \begin{displaymath}
    \ln x(t) \leqslant \lambda t-\mu \int_{0}^{t} x(s) \,ds +\sum_{i = 1}^{n} \beta_iB_i(t)  
  \end{displaymath}
  for $t\geqslant T$ a.s., where $\beta_i$ are constants,  then 
  $$\limsup_{t\to\infty} \frac{1}{t} \int_{0}^{t} x(s) \,ds\leqslant \frac{\lambda}{\mu} \quad a.s.$$
\item[(ii)] If there are positive constants $\mu $, $\lambda,$ and a random time $T>0$ such that
  \begin{displaymath}
    \ln x(t) \geqslant  \lambda t-\mu \int_{0}^{t} x(s) \,ds +\sum_{i = 1}^{n} \beta_iB_i(t) 
  \end{displaymath}
  for $t\geqslant T$ a.s., where $\beta_i$ are constants,  then 
  $$\liminf_{t\to\infty} \frac{1}{t} \int_{0}^{t} x(s)  \,ds \geqslant  \frac{\lambda}{\mu} \quad a.s.$$
\end{itemize}
\end{lemma}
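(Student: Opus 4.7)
The plan is to handle both parts through one device: convert the hypothesis, which is a bound on $\ln x(t)$, into a bound on the time integral $F(t) := \int_0^t x(s)\,ds$ by exponentiating and using $F'(t) = x(t)$. The only stochastic input needed is the strong law of large numbers for Brownian motion, which gives $B_i(t)/t \to 0$ a.s.\ for every $i$, so that $\sum_{i=1}^n \beta_i B_i(t)$ is $o(t)$ almost surely.

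For part (i), I would first exponentiate the hypothesis to get $F'(t) \leqslant \exp\bigl(\lambda t - \mu F(t) + \sum_i \beta_i B_i(t)\bigr)$ for $t \geqslant T$, then multiply by $e^{\mu F(t)}$ so that the left-hand side becomes $\mu^{-1}\frac{d}{dt} e^{\mu F(t)}$. Integrating from $T$ to $t$ yields
\[
e^{\mu F(t)} \leqslant e^{\mu F(T)} + \mu \int_T^t \exp\!\Bigl(\lambda s + \sum_{i=1}^n \beta_i B_i(s)\Bigr) ds.
\]
Given $\varepsilon > 0$, the SLLN supplies a random time $T'(\omega) \geqslant T$ such that $\sum_i \beta_i B_i(s) \leqslant \varepsilon C s$ for $s \geqslant T'$, where $C := \sum_i |\beta_i|$. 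The integrand is then bounded by $e^{(\lambda+\varepsilon C)s}$, so the integral is dominated by a constant multiple of $e^{(\lambda+\varepsilon C)t}$. Taking logarithms, dividing by $t$, and sending $t\to\infty$ followed by $\varepsilon \to 0$ will deliver $\limsup F(t)/t \leqslant \lambda/\mu$ a.s.

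Part (ii) runs in parallel with all inequalities reversed. After exponentiation and integration I obtain
\[
e^{\mu F(t)} \geqslant \mu \int_T^t \exp\!\Bigl(\lambda s + \sum_{i=1}^n \beta_i B_i(s)\Bigr) ds,
\]
and the SLLN now supplies $\sum_i \beta_i B_i(s) \geqslant -\varepsilon C s$ for $s$ large enough. Choosing $\varepsilon$ small enough that $\lambda - \varepsilon C > 0$ keeps the integrand at least $e^{(\lambda-\varepsilon C)s}$, so the integral remains exponentially large. Taking logs and dividing by $t$ will give $\liminf F(t)/t \geqslant (\lambda-\varepsilon C)/\mu$; letting $\varepsilon \to 0$ finishes the argument.

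The only care point is verifying that the SLLN-based control of the Brownian terms is uniform enough on a full-measure event to justify the exponentiation–integration estimates; once that is secured, the algebraic exponential substitution is routine, and no stochastic calculus is invoked beyond the SLLN itself. I expect the lower-bound direction in part (ii) to be the slightly more delicate one because it requires the additional tacit assumption $\lambda > 0$ (otherwise the exponential integral does not produce a usable lower bound), so in the write-up I would note that hypothesis explicitly.
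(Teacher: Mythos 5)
The paper does not prove this lemma at all; it imports it from the cited references \cite{liu2011survival, mandal2012stochastic}, so there is no in-paper argument to compare against. Your proposal is the standard proof from those sources and is correct: setting $F(t)=\int_0^t x(s)\,ds$, exponentiating, multiplying by $e^{\mu F(t)}$ to recognize $\mu^{-1}\frac{d}{dt}e^{\mu F(t)}$, integrating, and controlling the martingale terms via $B_i(t)/t\to 0$ a.s.\ is exactly how the result is established there. One small correction to your closing remark: $\lambda>0$ is not a tacit extra hypothesis — the statement explicitly assumes $\mu,\lambda$ are positive constants — so no caveat is needed in the write-up; just be sure to split the integral at the random time $T'$ so the piece over $[T,T']$ is absorbed into a finite random constant before taking logarithms.
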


We are now ready to state our main results in this section.
\begin{theorem}\label{thm:scs}
Let $(x(t),S(t),I(t))$ be the solution of \eqref{mulequa5} with initial value $(x_0,S_0,I_0) \in \overline{\mathbb{R}_+^3}\setminus (0,0,0).$ Assume that
\begin{equation}\label{mulequa17}
  \left\{\begin{aligned}
    &a_1-a-\frac{1}{2}\sigma_1^2>0, \\
    &a_2-\beta^*-\frac{1}{2}\sigma_2^2>0,     \\ 
    &\beta^*-\alpha-\frac{1}{2}\sigma_3^2 <0.
  \end{aligned}\right.
\end{equation}
Then, 
$$\lim_{t\to\infty} I(t) =0 \quad a.s.$$ 
 Furthermore,
there exists $\varepsilon >0$ which is independent of initial value such that
\begin{displaymath}
  \begin{aligned}
    &\liminf_{t\to\infty} \frac{1}{t} \int_{0}^{t} x(s) \,ds >\varepsilon, \qquad  \liminf_{t\to\infty} \frac{1}{t} \int_{0}^{t} S(s) \,ds >\varepsilon.
  \end{aligned}
\end{displaymath}
\end{theorem}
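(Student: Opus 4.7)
My plan is to exploit the three inequalities in \eqref{mulequa17} by applying It\^o's formula to $\ln I$, $\ln x$ and $\ln S$ in turn, and then to invoke \refl{lem: psm} to pass from pathwise differential inequalities to the desired bounds on Ces\`aro averages. First I would dispatch $I$: It\^o applied to $\ln I$ together with $\beta(S,I)S(t)\le\beta^*$ gives
\[
 d\ln I(t)\le\bigl[\beta^*-\alpha-\tfrac{1}{2}\sigma_3^2\bigr]\,dt+\sigma_3\,dW_3(t),
\]
so the third hypothesis in \eqref{mulequa17} combined with $\lim_{t\to\infty}W_3(t)/t=0$ a.s.\ forces $\limsup_{t\to\infty}\ln I(t)/t<0$ a.s., hence $I(t)\to 0$ a.s.; a Ces\`aro argument then yields $\tfrac{1}{t}\int_0^t I(s)\,ds\to 0$ a.s., which will be crucial in the $S$-bound below.

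Next, for the prey, I would apply It\^o to $\ln x$ and use the crude estimate $\tfrac{a(S+I)}{b+S+I+x}\le a$ to obtain
\[
 d\ln x(t)\ge\bigl[a_1-a-\tfrac{1}{2}\sigma_1^2-b_1 x(t)\bigr]\,dt+\sigma_1\,dW_1(t).
\]
Integrating from $0$ to $t$ and invoking \refl{lem: psm}(ii) with $\lambda=a_1-a-\tfrac{1}{2}\sigma_1^2>0$ and $\mu=b_1$ yields
\[
 \liminf_{t\to\infty}\frac{1}{t}\int_0^t x(s)\,ds\ge \frac{a_1-a-\tfrac{1}{2}\sigma_1^2}{b_1}\quad a.s.,
\]
a bound that depends only on the model parameters.

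For the susceptible predator, I would apply It\^o to $\ln S$, discard the non-negative singular term $\gamma I/S$, use $\beta(S,I)I\le\beta^*$, and bound $\tfrac{1}{c+x}\le \tfrac{1}{c}$ to decouple $S$ from $x$, yielding
\[
 d\ln S(t)\ge\bigl[a_2-\beta^*-\tfrac{1}{2}\sigma_2^2-(\tfrac{1}{c}+h)S(t)-(\tfrac{1}{c}+h)I(t)\bigr]\,dt+\sigma_2\,dW_2(t).
\]
Since $\tfrac{1}{t}\int_0^t I(s)\,ds\to 0$ a.s., for any $\delta>0$ there is a random $T_\delta$ with $(\tfrac{1}{c}+h)\int_0^t I(s)\,ds\le\delta t$ for $t\ge T_\delta$. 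Choosing $\delta$ small enough that $\lambda_\delta:=a_2-\beta^*-\tfrac{1}{2}\sigma_2^2-\delta>0$ and applying \refl{lem: psm}(ii) with $\mu=\tfrac{1}{c}+h$ produces
\[
 \liminf_{t\to\infty}\frac{1}{t}\int_0^t S(s)\,ds\ge \frac{\lambda_\delta}{\tfrac{1}{c}+h}\quad a.s.;
\]
sending $\delta\to 0^+$ gives a strictly positive bound independent of initial data, and one takes $\varepsilon$ smaller than the minimum of this and the $x$-bound.

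\textbf{Main obstacle.} The subtle step is the $S$-bound. The drift of $\ln S$ couples $S$ to $x$ through $\tfrac{1}{c+x}$, to $I$ through both $(\tfrac{1}{c+x}+h)I$ and $\beta(S,I)I$, and contains the singular recovery term $\gamma I/S$ which would spoil a direct symmetric argument. The strategy that rescues the proof is threefold: replace $\tfrac{1}{c+x}$ by $\tfrac{1}{c}$ to sever the dependence on $x$, drop $\gamma I/S\ge 0$ since it only strengthens the lower bound, and exploit the almost sure decay of $I$ from Step~1 to absorb all $I$-contributions into an arbitrarily small asymptotic slack. After these three reductions \refl{lem: psm}(ii) applies verbatim.
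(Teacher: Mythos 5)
Your proposal is correct and follows essentially the same route as the paper: It\^o on $\ln I$ with $\beta(S,I)S\le\beta^*$ to get $I\to 0$, It\^o on $\ln x$ with the bound $\tfrac{a(S+I)}{b+S+I+x}\le a$, and It\^o on $\ln S$ with the $\gamma I/S$ term dropped, the functional response bounded by $\beta^*$, and the vanishing Ces\`aro average of $I$ absorbed as slack before invoking Lemma~\ref{lem: psm}(ii). The only cosmetic differences are that the paper fixes the slack at half of $a_2-\beta^*-\tfrac12\sigma_2^2$ rather than letting $\delta\to 0^+$, and normalizes $S$ by $S(0)$ before applying the lemma to dispose of the additive constant $\ln S_0$.
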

\begin{proof}
  We firstly show the extinction of infected population, i.e. $\lim_{t\to\infty} I(t) =0$. 
From
  \begin{displaymath}
    \begin{aligned}
      d I(t)&=\big[\beta(S,I)S(t)I(t)- \alpha I(t)\big]d t+ \sigma_3I(t)dW_3(t),
    \end{aligned}
  \end{displaymath}
  we have 
  \begin{displaymath}
    \begin{aligned}
      d \ln I(t)&=(\beta(S,I)S(t)- \alpha  -\frac{1}{2} \sigma_3^2)d t +\sigma_3dW_3(t)  \\
      &\leqslant (\beta^*-\alpha -\frac{1}{2} \sigma_3^2)d t +\sigma_3dW_3(t). 
    \end{aligned}
  \end{displaymath}
  Then,
  \begin{displaymath}
    \begin{aligned}
      \frac{\ln I(t)-\ln I_0}{t} &\leqslant (\beta^* -\alpha -\frac{1}{2} \sigma_3^2)+ \frac{\sigma_3W_3(t)}{t}  \qquad a.s.& 
    \end{aligned}
  \end{displaymath}
Thus, 
$$\lim_{t \to \infty}  \frac{\ln I(t)}{t} \leqslant \beta^* -\alpha -\frac{1}{2} \sigma_3^2 <0 \qquad a.s. $$
As a result, we obtain that 
$$\lim_{t \to \infty} I(t) =0  \quad a.s. $$

 Secondly, we consider $S(t)$. Since 
\begin{displaymath}
  \begin{aligned}
    d S(t)&=\big(S(t)\big[a_2-(\frac{1}{x(t)+e} +h)\{S(t)+I(t)\}\big]-\beta(S,I)S(t)I(t)+\gamma I(t) \big) d t +\sigma_2S(t)dW_2(t), &
  \end{aligned}
\end{displaymath}
we have
\begin{displaymath}
  \begin{aligned}
    d \ln S(t)&=\big[a_2-(\frac{1}{x(t)+e} +h)(S(t)+I(t))-\beta(S,I)I(t)+\gamma \frac{ I(t) }{S(t)} - \frac{1}{2} \sigma_2^2\big]d t +\sigma_2dW_2(t) & \\
    &\geqslant \big[a_2-(\frac{1}{x(t)+e} +h)(S(t)+I(t))-\beta(S,I)I(t) - \frac{1}{2} \sigma_2^2\big]d t +\sigma_2dW_2(t) & \\
    &\geqslant \big[a_2-(\frac{1}{e} +h)(S(t)+I(t))-\beta^*- \frac{1}{2} \sigma_2^2\big] d t +\sigma_2dW_2(t).& 
  \end{aligned}
\end{displaymath}
Therefore,
\begin{displaymath}
  \begin{aligned}
    \frac{\ln S(t)-\ln S_0 }{t} &\geqslant (a_2-\beta^*-\frac{1}{2}\sigma_2^2) -\frac{1}{t}\int_{0}^{t} (\frac{1}{e}+h)S(s)  \,ds - \frac{1}{t}\int_{0}^{t} (\frac{1}{e}+h)I(s)  \,ds + \frac{\sigma_2W_2(t)}{t}  \qquad a.s.& 
  \end{aligned}
\end{displaymath}
%rewrite---------------------------------------------------
Since $\lim_{t\to \infty} I(t) =0 $ a.s. and $a_2-\beta^*-\frac{1}{2}\sigma_2^2>0$, there exists a random $T>0$ such that for all $ t \geqslant T$
\begin{displaymath}
  \begin{aligned}
    \frac{1}{t}\int_{0}^{t} (\frac{1}{e}+h)I(s)  \,ds < \frac{1}{2} (a_2-\beta^*-\frac{1}{2}\sigma_2^2) \quad a.s.
  \end{aligned}
\end{displaymath}
Thus, for all $t>T$,
\begin{equation}  \label{E5}
  \begin{aligned}
    \frac{\ln S(t)-\ln S_0 }{t} 
		&\geqslant (a_2-\beta^*-\frac{1}{2}\sigma_2^2) -\frac{1}{t}\int_{0}^{t} (\frac{1}{e}+h)S(s)  \,ds - \frac{1}{2} (a_2-\beta^*-\frac{1}{2}\sigma_2^2) + \frac{\sigma_2W_2(t)}{t}  \\
   		 &= \frac{1}{2} (a_2-\beta^*-\frac{1}{2}\sigma_2^2)-\frac{1}{t}\int_{0}^{t} (\frac{1}{e}+h)S(s)  \,ds+ \frac{\sigma_2W_2(t)}{t},
  \end{aligned}
\end{equation}
or equivalently
$$
   \ln z(t) \geqslant   \frac{1}{2}(a_2-\beta^*-\frac{1}{2}\sigma_2^2)t -  (\frac{1}{e}+h) S(0) \int_{0}^{t}z(s)  \,ds +\sigma_2W_2(t),
$$
where $z(t)=\frac{S(t)}{S(0)}$.

Using Lemma \ref{lem: psm} for the latter inequality, we obtain that
\begin{displaymath}
  \begin{aligned}
    \liminf_{t\to\infty} \frac{1}{t} \int_{0}^{t} z(s) \,ds &\geqslant \frac{a_2-\beta^*-\frac{1}{2}\sigma_2^2}{2(\frac{1}{e}+h) S(0)} >0  \qquad a.s.
  \end{aligned}
\end{displaymath}
Therefore,
$$ \liminf_{t\to\infty} \frac{1}{t} \int_{0}^{t} S(s) \,ds  \geqslant \frac{a_2-\beta^*-\frac{1}{2}\sigma_2^2}{2(\frac{1}{e}+h)} >0  \qquad a.s.$$

Finally, we consider $x(t)$. From
\begin{displaymath}
  \begin{aligned}
  d x(t)&=\big\{x(t)(a_1-b_1x(t))-\frac{ax(t)(S(t)+I(t))}{b+S(t)+I(t)+x(t)}\big\}d t+ \sigma_1x(t)dW_1(t), &
\end{aligned}
\end{displaymath}
we have
\begin{displaymath}
  \begin{aligned}
  d\ln x(t)&=\big\{a_1-b_1x(t)-\frac{a(S(t)+I(t))}{b+S(t)+I(t)+x(t)}- \frac{1}{2}\sigma_1^2\big\}d t+ \sigma_1dW_1(t) &\\
  &\geqslant  \big\{a_1-b_1x(t)-a- \frac{1}{2}\sigma_1^2\big\}d t+ \sigma_1dW_1(t). &
\end{aligned}
\end{displaymath}
By using the same argument as above, we have
$$ \liminf_{t\to\infty} \frac{1}{t} \int_{0}^{t} x(s) \,ds  \geqslant \frac{a_1-a-\frac{1}{2}\sigma_1^2}{b_1} >0  \qquad a.s.$$
 The proof is complete.
\end{proof}

\section{Existence of Borel invariant measure}
In this section, we show the existence of a Borel invariant measure for the It$\hat{o}$ process $(x(t), S(t), I(t))$ on the domain of $\mathbb{R}_+^3$, assuming that the sustainability condition is satisfied.   The transition probability of $(x(t),S(t),I(t))$ is denoted by $P(.,.,.,.,.)$, i.e.,
\begin{displaymath}
  \begin{aligned}
    P(t,\xi,\eta,\varsigma, \mathcal{K} )=\mathbb{P} \{(x(t),S(t),I(t))\in \mathcal{K};(x(0),S(0),I(0))=(\xi,\eta,\varsigma)\}
  \end{aligned}
\end{displaymath}
for $0\leqslant t<\infty, (\xi,\eta,\varsigma)\in \overline{\mathbb{R}_+^3}$ and $\mathcal{K} \in \mathcal{B} (\overline{\mathbb{R}_+^3})$. 

Following \cite{lin1970conservative, foguel1973ergodic}, we have 
\begin{itemize}
\item[(i)]  $P(t, x, S, I) $ induces a strongly continuous semi-group $\{P_t\}_{0\leqslant t<\infty}$ of operators
on the space $\mathcal{C}_B(\overline{\mathbb{R}_+^3})$ of bounded continuous functions:
\begin{displaymath}
  \begin{aligned}
    P_t f(x,S,I)=\int_{\mathbb{R}_+^3}f(\xi,\eta,\varsigma  )P(x(t),S(t),I(t), d\xi d\eta d\varsigma ),  \qquad f\in \mathcal{C}_B(\overline{\mathbb{R}_+^3}).
  \end{aligned}
\end{displaymath}
\item[(ii)] $P(t,\xi, \eta,\varsigma$,.) induces a positive contraction  $[.P_t]$ on the space $M(\overline{\mathbb{R}_+^3}, \mathcal{B} (\overline{\mathbb{R}_+^3}))$ of finite 
signed measures:
\begin{displaymath}
  \begin{aligned}
    [\mu P_t](\mathcal{K} )=\int_{\mathbb{R}_+^3} P(t, \xi,\eta,\varsigma, \mathcal{K})\mu(d\xi_1d\eta_1d\varsigma_1),  \qquad   \mu \in M(\overline{\mathbb{R}_+^3}, \mathcal{B} (\overline{\mathbb{R}_+^3})), \mathcal{K} \in \mathcal{B} (\overline{\mathbb{R}_+^3}).
  \end{aligned}
\end{displaymath}
\end{itemize}

\begin{definition}
  A Borel measure $\nu$ on $ \overline{\mathbb{R}_+^3}$ (i.e. a positive measure that is finite on any compact set of $ \overline{\mathbb{R}_+^3}$)
  is said to be invariant with respect to $\{P_t\}_{0\leqslant t<\infty}$ if for $0\leqslant t<\infty$ and $\mathcal{K} \in \mathcal{B} (\overline{\mathbb{R}_+^3})$,
  \begin{displaymath}
    \begin{aligned}
     [\nu P_t](\mathcal{K} )=\nu (\mathcal{K} ).
    \end{aligned}
  \end{displaymath}
  \end{definition}
  
  The following result is well-known.
\begin{lemma}[\cite{lin1970conservative}]\label{lem:ebi}
   Let X be a locally compact perfectly normal topological space. Let $\{Q_t\}_{0\leqslant t<\infty}$ be a strongly continuous semi-group on $\mathcal{C}_B(X)$ generated by a transition 
   probability on $(X, \mathcal{B} (X))$. If there exists a non-negative function g in the space $\mathcal{C}_0(X)$ of continuous functions with compact support such
   that
   \begin{displaymath}
    \begin{aligned}
    \int_{0}^{\infty} Q_tg(t) \,dt = \infty,   \qquad x\in X,
    \end{aligned}
  \end{displaymath}
  then there exists a Borel invariant measure for $\{Q_t\}_{0\leqslant t< \infty}$.
\end{lemma}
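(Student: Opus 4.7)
The plan is to construct an invariant Borel measure by a Krylov--Bogolyubov-type time average, using the function $g$ whose integral diverges as a normalizing probe that prevents the limit from collapsing to the zero measure. Fix a point $x_0 \in X$ and define, for each $T > 0$, the positive linear functional on $\mathcal{C}_0(X)$
$$
\Phi_T(f) = \frac{\int_0^T Q_t f(x_0)\,dt}{\int_0^T Q_t g(x_0)\,dt}.
$$
By hypothesis the denominator is a positive, continuous, eventually divergent function of $T$, and $\Phi_T(g) = 1$ by construction. Each $\Phi_T$ corresponds, via Riesz representation on the locally compact Hausdorff space $X$, to a Radon measure $\nu_T$. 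I would then extract a subsequence $\{T_k\}$ along which $\nu_{T_k}(f) \to \nu(f)$ for every $f \in \mathcal{C}_0(X)$, and take $\nu$ as the candidate invariant measure. Nontriviality is automatic: since $g \in \mathcal{C}_0(X)$, $\nu(g) = \lim_k \nu_{T_k}(g) = 1$, so $\nu \not\equiv 0$.

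Next I would verify invariance. For any $s \geq 0$ and $f \in \mathcal{C}_0(X)$, a change of variables gives
$$
\Phi_T(Q_s f) - \Phi_T(f) = \frac{\int_T^{T+s} Q_t f(x_0)\,dt \,-\, \int_0^s Q_t f(x_0)\,dt}{\int_0^T Q_t g(x_0)\,dt},
$$
whose numerator is bounded in absolute value by $2 s \|f\|_\infty$ while the denominator tends to $\infty$. Passing to the limit along $T_k$ yields $\nu(Q_s f) = \nu(f)$ for every $f \in \mathcal{C}_0(X)$. Because $X$ is locally compact and perfectly normal, Radon measures are inner and outer regular on Borel sets, so this identity for $\mathcal{C}_0(X)$ test functions lifts to $[\nu Q_s](\mathcal{K}) = \nu(\mathcal{K})$ for every $\mathcal{K} \in \mathcal{B}(X)$, which is the invariance demanded by the definition.

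The main obstacle is the extraction of the vague limit $\nu$. Since $X$ is non-compact, the $\nu_T$'s need not have uniformly bounded total mass, and one cannot invoke Banach--Alaoglu on the dual of $\mathcal{C}_B(X)$; the natural framework is vague compactness in the dual of $\mathcal{C}_0(X)$, which requires $\sup_T |\nu_T(f)| < \infty$ for every $f \in \mathcal{C}_0(X)$. The single normalization $\nu_T(g) = 1$ is not enough by itself when the compact support of $f$ lies outside that of $g$, so the argument must couple a compact exhaustion of $X$ with a diagonal subsequence and use the divergence hypothesis $\int_0^\infty Q_t g(x)\,dt = \infty$ at every $x$ in a Hopf/Chacon--Ornstein-style recurrence estimate to control the mass $\nu_T$ places on arbitrary compacts. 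This is the technical heart of the argument and the essential role of the hypothesis; once the vague limit is secured, the invariance and nontriviality steps above complete the proof.
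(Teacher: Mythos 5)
First, a point of comparison: the paper does not prove this lemma at all --- it is imported verbatim from Lin's 1970 work on conservative Markov processes (with Foguel's monograph as companion reference), so there is no in-paper argument to measure your proposal against. Judged on its own terms, your Krylov--Bogolyubov time-average construction is the natural first attempt, and the normalization by $\int_0^T Q_t g(x_0)\,dt$ is the right way to use the divergence hypothesis to rule out the zero measure. But the proposal has two genuine gaps, one of which you flag and one of which you do not.

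The gap you flag --- local uniform boundedness of the normalized occupation measures $\nu_T$ on compact sets, which is what vague sequential compactness in the dual of $\mathcal{C}_0(X)$ actually requires --- cannot be deferred as a technicality; it is essentially the whole theorem. The single normalization $\nu_T(g)=1$ controls mass only where $g>0$, and for $f$ supported away from the support of $g$ the ratio $\int_0^T Q_tf(x_0)\,dt \,/\, \int_0^T Q_tg(x_0)\,dt$ can a priori diverge. Controlling such ratios is precisely the content of the Chacon--Ornstein ratio ergodic theorem for conservative operators, which is the machinery Lin's proof actually runs on; gesturing at a ``Hopf/Chacon--Ornstein-style recurrence estimate'' without carrying it out leaves the load-bearing step empty. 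The second gap is in the invariance step: vague convergence $\nu_{T_k}\to\nu$ gives $\nu_{T_k}(h)\to\nu(h)$ only for $h\in\mathcal{C}_0(X)$, but $Q_sf$ is merely bounded and continuous --- the semigroup does not preserve compact support. Without tightness of $\{\nu_{T_k}\}$ you cannot pass to the limit in $\nu_{T_k}(Q_sf)$; the lower-semicontinuity inequality available for vague limits yields only $\int Q_sf\,d\nu\leqslant\nu(f)$ for $f\geqslant 0$, i.e.\ that $\nu$ is excessive rather than invariant, and upgrading an excessive measure to an invariant one again requires the conservativity furnished by the hypothesis $\int_0^\infty Q_tg\,dt=\infty$. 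So the skeleton is the standard one, but both critical steps are missing, and filling them amounts to reproving Lin's theorem.
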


Let us now state our main theorem in this section. 
\begin{theorem}\label{thm:ebi}
  Assume that sustainable condition \eqref{mulequa17} holds true. If $\alpha>\beta^*,$ then $\{P_t\}_{0\leqslant t<\infty}$ has a Borel invariant measure on $ \overline{\mathbb{R}_+^3}$.
\end{theorem}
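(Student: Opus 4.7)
The plan is to verify the hypothesis of the Lin--Foguel criterion stated in Lemma~\ref{lem:ebi}. Concretely, I will construct a single non-negative $g\in\mathcal{C}_0(\overline{\mathbb{R}_+^3})$ with compact support satisfying
\[
\int_0^\infty P_tg(\xi,\eta,\varsigma)\,dt=\infty\quad\text{for every }(\xi,\eta,\varsigma)\in\overline{\mathbb{R}_+^3};
\]
since $\overline{\mathbb{R}_+^3}$ is locally compact and perfectly normal, Lemma~\ref{lem:ebi} then yields the desired Borel invariant measure.

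The principal input will be the uniform-in-time first-moment control of Theorem~\ref{thm:tvb}. The hypothesis $\alpha>\beta^*$, together with $\alpha=\gamma+\alpha^*>\gamma$, gives $\alpha>\max\{\beta^*,\gamma\}$, so Theorem~\ref{thm:tvb}(ii) applies with $\theta=1$ and delivers $\mathbb{E}S(t)+\mathbb{E}I(t)\le\bar\alpha_1$ for all $t\ge 0$; combined with Theorem~\ref{thm:tvb}(i) one has $\mathbb{E}x(t)+\mathbb{E}S(t)+\mathbb{E}I(t)\le\alpha_1+\bar\alpha_1$. Moreover, inspecting the comparison-ODE arguments in those proofs shows that $\alpha_1$ and $\bar\alpha_1$ arise as suprema of solutions of $\dot y=C-y$ and $\dot z=\bar C-\kappa z$ whose limiting values $C$ and $\bar C/\kappa$ depend only on the model parameters. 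Hence there is a constant $C_0>0$ independent of the starting point, and for each $(\xi,\eta,\varsigma)$ some $T_0=T_0(\xi,\eta,\varsigma)$, with
\[
\mathbb{E}x(t)+\mathbb{E}S(t)+\mathbb{E}I(t)\le 2C_0\qquad(t\ge T_0).
\]

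Now fix $M>12C_0$ once and for all and pick any continuous $g\colon\overline{\mathbb{R}_+^3}\to[0,1]$ with $\operatorname{supp}g\subset[0,2M]^3$ and $g\equiv 1$ on $[0,M]^3$. By Markov's inequality and a union bound, for every starting point and every $t\ge T_0$,
\[
\mathbb{P}\bigl((x(t),S(t),I(t))\in[0,M]^3\bigr)\ge 1-\frac{\mathbb{E}x(t)+\mathbb{E}S(t)+\mathbb{E}I(t)}{M}\ge\frac{5}{6},
\]
so $P_tg(\xi,\eta,\varsigma)\ge 5/6$ for $t\ge T_0$ and therefore $\int_0^\infty P_tg(\xi,\eta,\varsigma)\,dt=\infty$, as required.

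The main obstacle is the uniformity requirement that Lemma~\ref{lem:ebi} places on the test function $g$: at face value Theorem~\ref{thm:tvb} provides moment bounds depending on the initial data, whereas we need a single $M$ valid for every starting point. The resolution above is to notice that the majorizing ODEs in the proof of Theorem~\ref{thm:tvb} possess exponentially attracting equilibria whose values depend only on the model parameters, so $M$ can indeed be chosen globally. Once this observation is in place, the remaining ingredients (the Markov-type occupation estimate and the construction of $g$) are routine.
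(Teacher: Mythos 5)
Your proof follows essentially the same route as the paper: uniform-in-time first-moment bounds from Theorem~\ref{thm:tvb} (applied with $\theta=1$, using $\alpha>\max\{\beta^*,\gamma\}$), Markov's inequality to show the process lies in a fixed box $[0,M]^3$ with probability bounded away from zero for all large $t$, a continuous bump function equal to $1$ on that box and compactly supported, and then Lemma~\ref{lem:ebi}. Your extra observation that the comparison ODEs in the proof of Theorem~\ref{thm:tvb} have exponentially attracting equilibria depending only on the model parameters — so that $M$ can be chosen uniformly in the starting point, with only the waiting time $T_0$ depending on $(\xi,\eta,\varsigma)$ — addresses a uniformity requirement of Lemma~\ref{lem:ebi} that the paper's own proof glosses over, and is a worthwhile refinement.
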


  \begin{proof}
It follows from Theorem \ref{thm:tvb} that  there exists $\kappa>0$ such that 
    \begin{equation*}  
    \sup_{t\geq 0}\mathbb E x(t) \leq \kappa,\quad
      \sup_{t\geq 0} \mathbb{E} S (t)\leqslant \kappa, \quad    \sup_{t\geq 0} \mathbb{E} I (t)\leqslant \kappa.
    \end{equation*}
Let  $\varepsilon \in (0, 1) $ and  $M= \frac{4 \kappa}{\varepsilon}.$
The Markov inequality provides that for every $t\geq 0,$
    \begin{equation}  \label{E7}
\begin{cases}
        \mathbb{P}\{x(t) > M \}\leqslant \frac{\mathbb{E} x(t)}{M} \leq \frac{\kappa}{M}  =\frac{\varepsilon}{4},\\
      \mathbb{P}\{S(t) > M \}\leqslant \frac{\mathbb{E} S(t)}{M_2} \leq \frac{\kappa}{M}  =\frac{\varepsilon}{4},\\       
       \mathbb{P}\{I(t) > M \}\leqslant \frac{\mathbb{E} I(t)}{M_3} \leq \frac{\kappa}{M}  =\frac{\varepsilon}{4},
\end{cases}
    \end{equation}
and therefore
%%%% {E12}
\begin{equation} \label{E12}    
\begin{cases}     
   \inf_{t \geq 0} \mathbb{P}\{0 \leqslant x(t) \leqslant M \} \geqslant 1-\frac{\varepsilon}{4}, \\
 \inf_{t \geq 0} \mathbb{P}\{0 \leqslant S(t) \leq  M \}  \geq 1-\frac{\varepsilon}{4},\\
 \inf_{t \geq 0} \mathbb{P}\{0 \leqslant I(t)\leqslant M \}  \geq 1-\frac{\varepsilon}{4}.
\end{cases}
 \end{equation}

Define a set $K$ by
    $$K=[0, M] \times [0,M] \times [0,M].$$
Using \eqref{E7} and \eqref{E12}, we have for every $t\geq 0,$ 
 \begin{align}
   \mathbb{P}\{(x(t),S(t),I(t))\in K \} = & \mathbb{P}\{0 \leqslant I(t) \leqslant M_3 \} - \mathbb{P}\{x(t) > M, 0 \leqslant S(t) \leqslant M, 0 \leqslant I(t) \leqslant M \}    \notag\\
   &- \mathbb{P}\{0 \leqslant x(t) \leqslant M, S(t) > M, 0 \leqslant I(t) \leqslant M \}  \notag\\
& - \mathbb{P}\{x(t) > M, S(t)> M, 0 \leqslant I(t) \leqslant M \}   \notag\\  
\geq & (1-\frac{\varepsilon}{4}) - \frac{\varepsilon}{4}-\frac{\varepsilon}{4}-\frac{\varepsilon}{4} =1-\varepsilon.  \label{E15}
   \end{align}

Consider a non-negative function $g\in \mathcal{C}_0(\overline{\mathbb{R}_+^3})$ such that
    \begin{displaymath}
     g(x,S,I)=
     \begin{cases}
      1, & (x,S,I)\in K,\\
      0, & (x,S,I)\notin \overline{\mathbb{R}_+^3} \setminus K_1,
     \end{cases}
    \end{displaymath}
where $K_1 \supset K$ is a bounded set of $\overline{\mathbb{R}_+^3}.$
By using \eqref{E15},  we have
   \begin{displaymath}
    \begin{aligned}
      \int_{0}^{t} P_sg(x,S,I) \,ds &= \int_{0}^{t} \int_{\overline{\mathbb{R}_+^3}} g(\xi, \eta, \varsigma )P(s, x, S, I,d\xi d\eta d\varsigma ) \,ds & \\
      & \geqslant \int_{0}^{t} \int_K g(\xi, \eta, \varsigma )P(s, x, S, I,d\xi d\eta d\varsigma ) \,ds & \\
      &= \int_{0}^{t} \mathbb{P}\{(x(s) ,S(s) ,I(s) )\in K\} \,ds \rightarrow \infty \qquad when \quad t \rightarrow \infty.
    \end{aligned}
  \end{displaymath}
Thanks to Lemma \ref{lem:ebi}, we conclude that there exists a Borel invariant measure $\nu$ for  $\{P_t\}_{0\leqslant t <\infty}$ such that $\nu(K)>0$. The proof is complete.
  \end{proof}

\section{Numerical examples}
This section exhibits some numerical examples for sustainability of the model \eqref{mulequa5} and possibility of decline. For the computations, we used a scheme of order $1.5$ (see, e.g., \cite{Kloeden2003}).

\subsection{Sustainability}
We set the parameters as follows: $a=0.6$, $b=2$, $a_1=2.7$, $a_2=1.2$, $b_1=0.1$, $c=1$, $h=0.01$, $\beta^*=1$, $k=0.1$, $\gamma=0.3$, $\alpha=1.1$, $\sigma_1=0.5$, $\sigma_2=0.6$, $\sigma_3=0.4$, and take the initial values $x(0)=6$, $S(0)=10$, and $I(0)=2$. Furthermore, we choose $N=500$ and $T=20$. Note that these parameter values satisfy the conditions of Theorem \ref{thm:scs}.

Figure \ref{figure1} displays sample trajectories of $x$, $S$, and $I$ in the phase space and in time. We calculate $N$ points in the time interval $(0,T]$ with a step size of $\frac{T}{N}$.

\begin{figure}[H]
  \begin{center}  
    \includegraphics[scale=0.7]{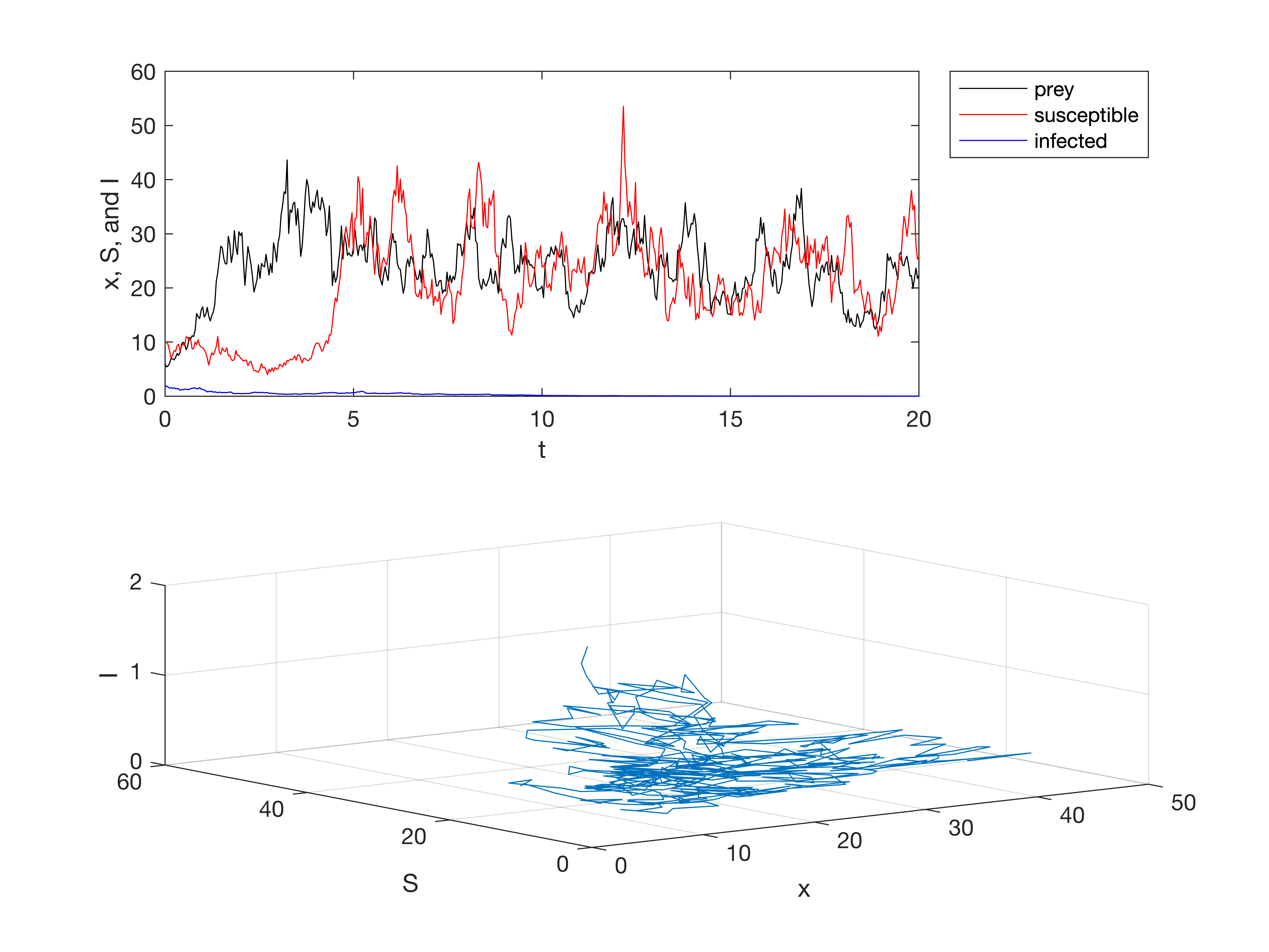}
    \caption{Sample trajectories in time (above) and in the phase space (below) with $a=0.6$, $b=2$, $a_1=2.7$, $a_2=1.2$, $b_1=0.1$, $c=1$, $h=0.01$, $\beta^*=1$, $k=0.1$, $\sigma_1=0.5$, $\sigma_2=0.6$, $\sigma_3=0.4$, $\gamma=0.3$, $\alpha=1.1$, and initial values $x(0)=6$, $S(0)=10$, and $I(0)=2$.}  \label{figure1}
  \end{center}
\end{figure}

To explore the sustainability of the system, we set new values for $N$ and $T$, as $N=5000$ and $T=50$. We keep the other parameters the same as before. We define the time-averaged quantities 
$$x^*(t) =\frac{1}{t} \int_0^t x(s)ds, \quad S^*(t) =\frac{1}{t} \int_0^t S(s)ds, \quad t \in [0,\infty),$$
with a convention that $x^*(0)=x(0)$ and $ S^*(0)=S(0).$ Figure \ref{figure2} presents a sample trajectory of the three processes $x^*$, $S^*$, and $I$. It is observed that $x^*$ and $S^*$ are bounded below by some positive constant, while $I$ converges to zero.
\begin{figure}[H]
  \begin{center}  
    \includegraphics[scale=0.7]{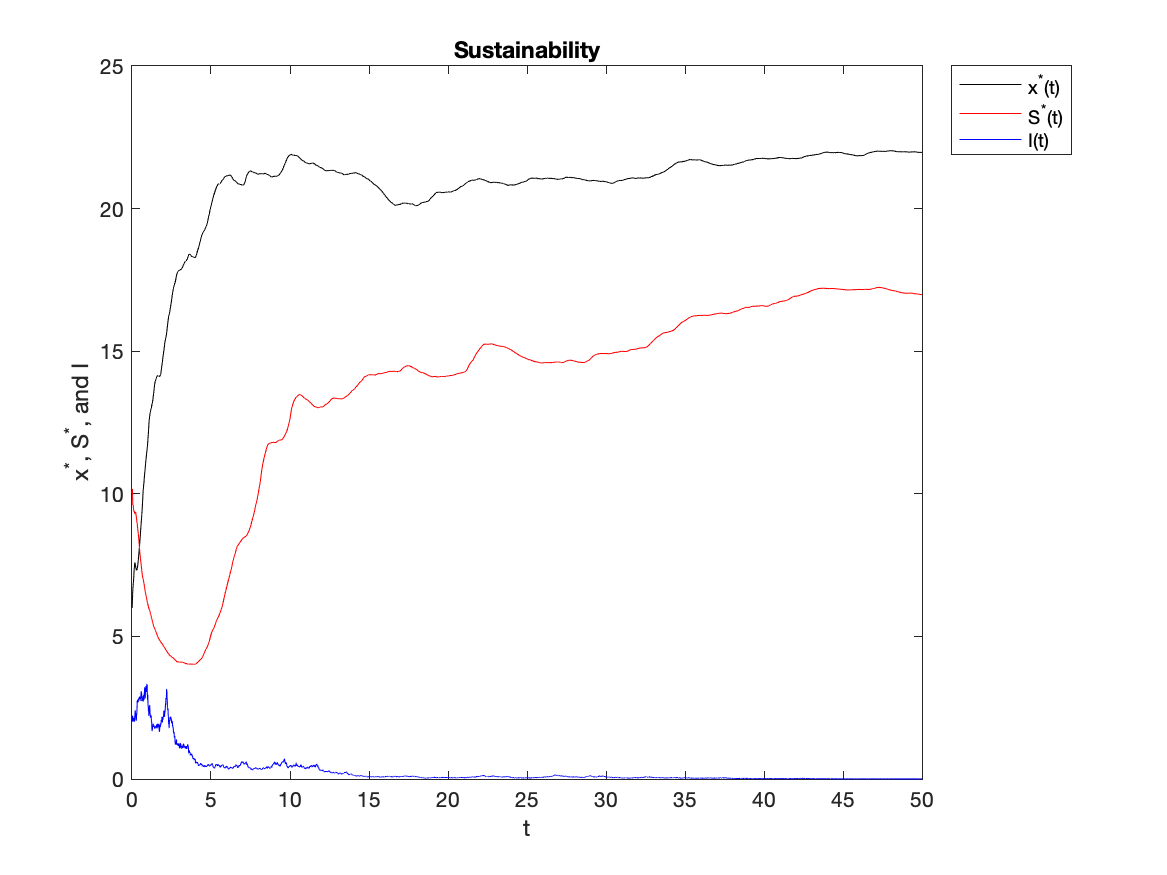}
    \caption{Sample trajectories of $x^*$, $S^*$, and $I$ with $a=0.6$, $b=2$, $a_1=2.7$, $a_2=1.2$, $b_1=0.1$, $c=1$, $h=0.01$, $\beta^*=1$, $k=0.1$, $\sigma_1=0.5$, $\sigma_2=0.6$, $\sigma_3=0.4$, $\gamma=0.3$, $\alpha=1.1$, and initial values $x(0)=6$, $S(0)=10$, and $I(0)=2$.}  \label{figure2}
  \end{center}
\end{figure}

To visualize the support of the invariant measure, we generate $10000$ samples of the process $(x,S,I)$ and plot their values at time $T$ and $T-\frac{T}{N}$ in Figure \ref{figure3}. The parameter values are the same as in Figure \ref{figure2}.
\begin{figure}[H]
  \begin{center}  
    \includegraphics[scale=0.7]{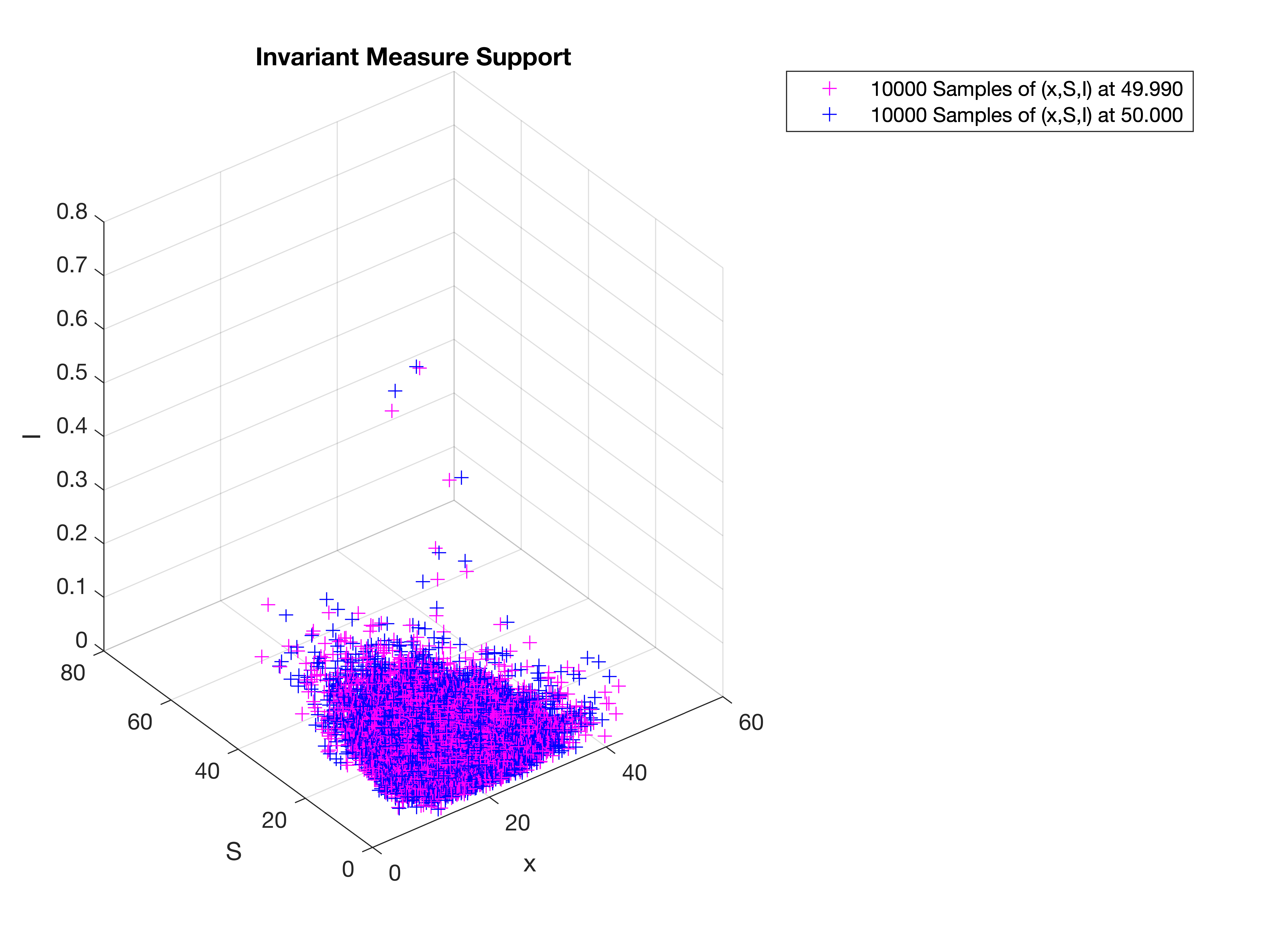}
    \caption{Distribution of $(x,S,I)$ with $a=0.6$, $b=2$, $a_1=2.7$, $a_2=1.2$, $b_1=0.1$, $c=1$, $h=0.01$, $\beta^*=1$, $k=0.1$, $\sigma_1=0.5$, $\sigma_2=0.6$, $\sigma_3=0.4$, $\gamma=0.3$, $\alpha=1.1$, and initial values $x(0)=6$, $S(0)=10$, and $I(0)=2$.}  \label{figure3}
  \end{center}
\end{figure}

\subsection{Decline}
We set the parameters as follows: $a=0.6$, $b=0.8$, $a_1=0.1$, $a_2=0.2$, $b_1=0.5$, $c=0.9$, $h=0.01$, $\beta^*=1$, $k=0.1$, $\sigma_1=0.1$, $\sigma_2=0.85$, $\sigma_3=0.95$, $\gamma=0.3$, $\alpha=0.4$. We take the initial values of $x$, $S$, and $I$ to be $0.4$, $0.5$, and $0.6$, respectively. We use $N=5000$ and $T=20$. These parameter values satisfy the conditions in Theorem \ref{thm:dos}.

Figure \ref{figure4} illustrates the decline of the predator species, with $S$ and $I$ both tending to $0$ as $t\to\infty$.

\begin{figure}[H]
  \begin{center}  
    \includegraphics[scale=0.7]{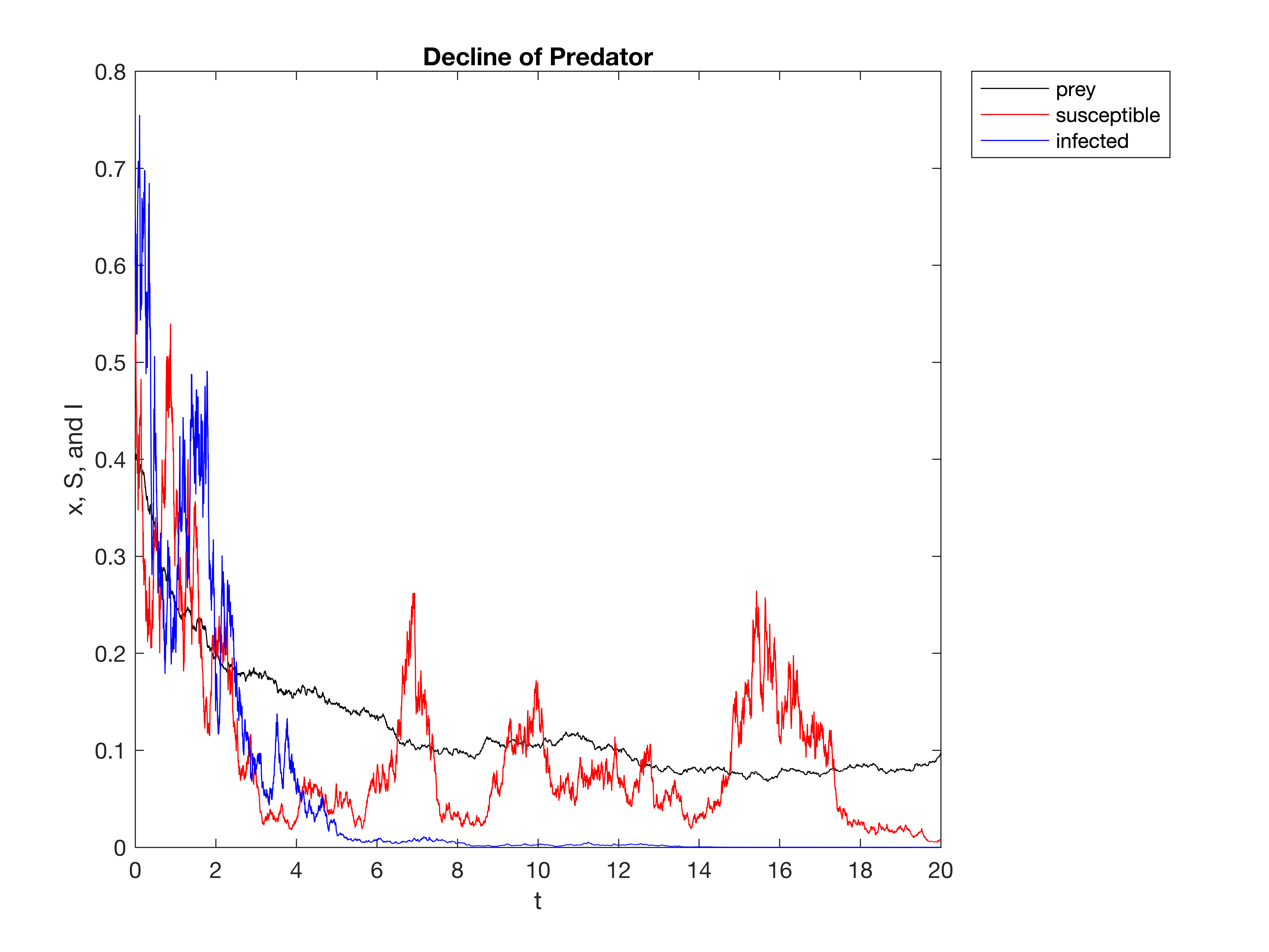}
    \caption{Decline of predator with $a=0.6$, $b=0.8$, $a_1=0.1$, $a_2=0.2$, $b_1=0.5$, $c=0.9$, $h=0.01$, $\beta^*=1$, $k=0.1$, $\sigma_1=0.1$, $\sigma_2=0.85$, $\sigma_3=0.95$, $\gamma=0.3$, $\alpha=0.8$, $x(0)=0.4$, $S(0)=0.5$, and $I(0)=0.6$.}  \label{figure4}
  \end{center}
\end{figure}

Next, we change the values of the noise intensity and $\alpha$  while keeping all other parameters the same as in Figure \ref{figure4}. The new parameter values are $\sigma_1=0.9, \sigma_2=0.2, \sigma_3=0.1, \alpha=0.4$ and they satisfy the conditions in Theorem \ref{thm:dop}. Figure \ref{figure5} illustrates the decline of the prey species, where $x$ tends to $0$.
\begin{figure}[H]
  \begin{center}  
    \includegraphics[scale=0.7]{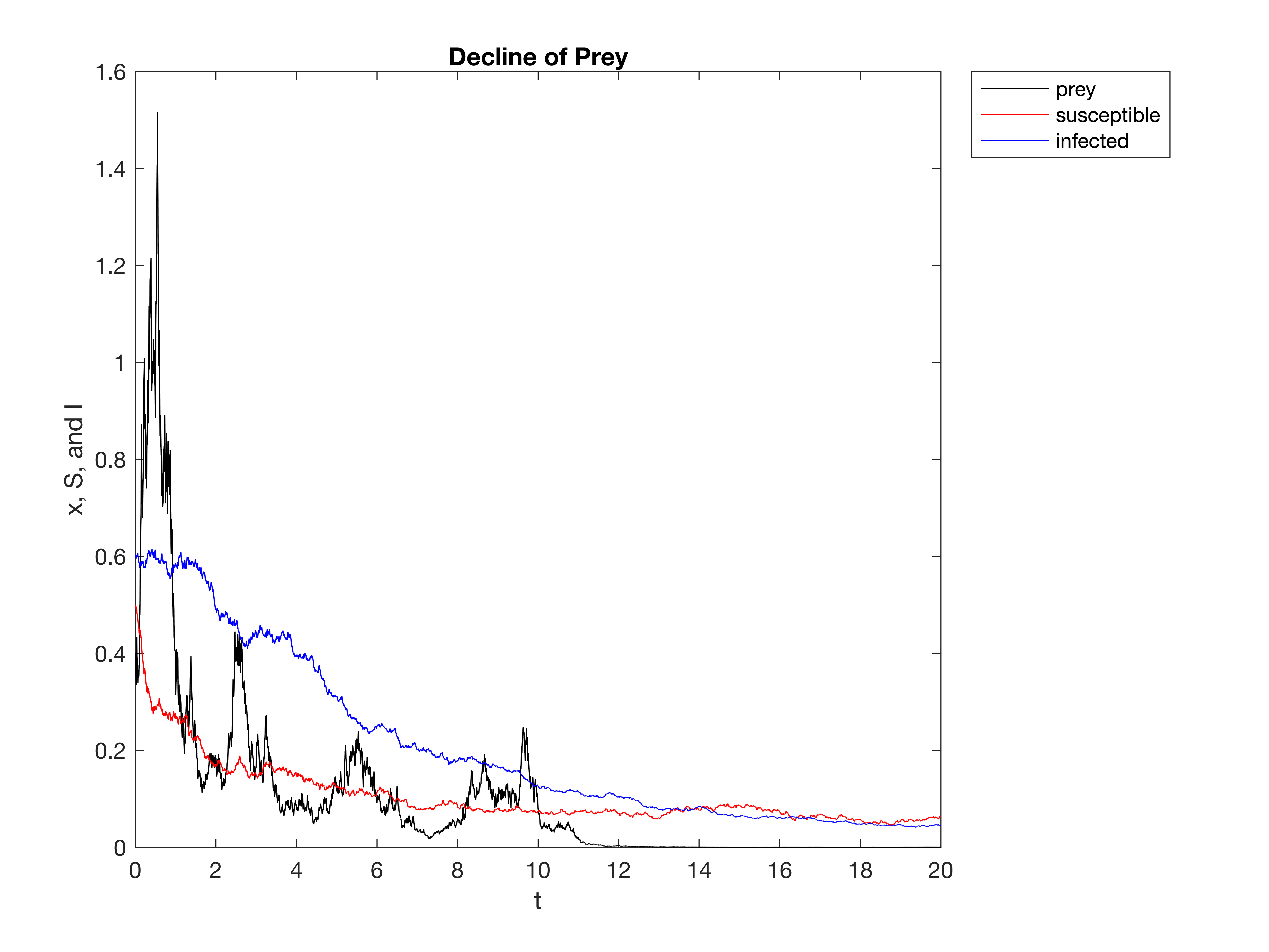}
    \caption{Decline of prey with $a=0.6$, $b=0.8$, $a_1=0.1$, $a_2=0.2$, $b_1=0.5$, $c=0.9$, $h=0.01$, $\beta^*=1$, $k=0.1$, $\sigma_1=0.9, \sigma_2=0.2, \sigma_3=0.1,$ $\gamma=0.3$, $\alpha=0.4$, $x(0)=0.4$, $S(0)=0.5$, and $I(0)=0.6$.}  \label{figure5}
  \end{center}
\end{figure}

\section{Conclusions}

In this paper, we have considered a prey-predator model with infected predator population and studied the dynamical behaviour under environmental driving forces. The predator population is assumed to be generalist in nature as they have an alternative implicit food source. The predator growth rate primarily follow the Leslie-Gower formulation but an modification is introduced to model the resource independent intra-specific competition rate explicitly. This modified formulation has an advantage - large density of favorable food source can not eliminate the strength of intra-specific competition. Mathematically, the modified specialist predator's growth rate helps to prove the global existence of solution for the stochastic differential equation model.  

Our study has provided analytical conditions for the sustainability of both prey and predator populations, as well as for the decline of each species. Additionally, we have explored the existence of a Borel invariant measure.

The analysis of sustainability conditions offers valuable insights into the long-term viability of prey and predator populations within their respective ecosystems. Understanding the factors that contribute to population sustainability is essential for effective conservation efforts and the preservation of balanced ecological systems.

Furthermore, our investigation of decline conditions for each species offers valuable information on the factors that can lead to population decreases or even local extinctions. Identifying these conditions helps identify potential threats or vulnerabilities in the ecosystem, enabling targeted conservation strategies to mitigate population decline and preserve biodiversity.

The discovery of a Borel invariant measure adds to our understanding of the long-term behavior of prey-predator populations and contributes to assessing the stability and resilience of the ecosystem.

In conclusion, our research holds significance for informing conservation strategies, ecosystem management, and enhancing our understanding of the intricate dynamics within ecological systems.

\section*{Acknowledgments}
The work of the first and last authors was partly supported by JSPS KAKENHI Grant Number 19K14555.
\bibliographystyle{plain}

\bibliography{reference}

\end{document}